\newcommand{\m}[1]{\mathsf{#1}}
\newcommand{\mb}[1]{\mathbf{#1}}
\newcommand{\mi}[1]{\mbox{\it #1}}
\newcommand{\CC}{\mathcal{C}}
\newcommand{\up}{{\uparrow}}
\newcommand{\down}{{\downarrow}}
\newcommand{\uup}[2]{{{}^{#1}_{#2}}{{\uparrow}\kern-0.2em{\downarrow}}}
\newcommand{\ddown}[2]{{{}^{#1}_{#2}}{{\downarrow}\kern-0.2em{\uparrow}}}
\newcommand{\uscore}{\mbox{\tt\char`\_}}
\newcommand{\mmode}[1]{{\mathchoice{\m{#1}}{\m{#1}}{\scriptscriptstyle\m{#1}}{\scriptscriptstyle\m{#1}}}}
\newcommand{\mS}{\mmode{S}}
\newcommand{\mN}{\mmode{N}}
\newcommand{\seq}{\vdash}
\newcommand{\semi}{\mathrel{;}}
\newcommand{\cut}{\m{cut}}
\newcommand{\id}{\m{id}}
\newcommand{\vvdash}{\mathrel{\vdash\kern-0.8ex\vdash}}
\newcommand{\lolli}{\multimap}
\newcommand{\tensor}{\otimes}
\newcommand{\with}{\mathbin{\binampersand}}
\newcommand{\one}{\mathbf{1}}
\newcommand{\bang}{{!}}
\newcommand{\eval}{\m{eval}}
\newcommand{\cont}{\m{cont}}
\newenvironment{rules}{\[\begin{array}{c}}{\end{array}\]}
\newenvironment{srules}{\[\begin{small}\begin{array}{c}}{\end{array}\end{small}\]}
\newenvironment{crules}[1]{\[\hspace*{#1}\begin{small}\begin{array}{c}}{\end{array}\end{small}\]}
\newcommand{\SAX}{\textsc{Sax}}
\newcommand{\langname}{{Seax}}
\newcommand{\larrow}{\;}
\begin{document}
\title{Back to Futures}

\author{Klaas Pruiksma \and Frank Pfenning}
\institute{Carnegie Mellon University \\
\email{kpruiksm@andrew.cmu.edu} \qquad \email{fp@cs.cmu.edu}}

%
%
\authorrunning{K. Pruiksma and F. Pfenning}
%
%
\maketitle              
\begin{abstract}
Common approaches to concurrent programming begin with languages whose
semantics are naturally sequential and add new constructs that provide
limited access to concurrency, as exemplified by \emph{futures}.
This approach has been quite successful, but often does not provide a
satisfactory theoretical backing for the concurrency constructs, and
it can be difficult to give a good semantics that allows a programmer to use
more than one of these constructs at a time.

We take a different approach, starting with a concurrent language based
on a Curry-Howard interpretation of adjoint logic, to which we add three
atomic primitives that allow us to encode sequential composition and various
forms of synchronization. The resulting language is highly expressive,
allowing us to encode futures, fork/join parallelism, and monadic concurrency
in the same framework. Notably, since our language is based on adjoint logic, we
are able to give a formal account of \emph{linear futures}, which have been used
in complexity analysis by Blelloch and Reid-Miller. The uniformity of this approach
means that we can similarly work with many of the other concurrency primitives in
a linear fashion, and that we can  mix several of these forms of concurrency in the
same program to serve different purposes.

\keywords{Shared-memory concurrency  \and Adjoint Logic \and Futures}
\end{abstract}
\section{Introduction}


Concurrency has been a very useful tool in increasing performance
of computations and in enabling distributed computation, and consequently,
there are a wide variety of different approaches to programming languages
for concurrency. A common pattern is to begin with a sequential language
and add some form of concurrency primitive, ranging from threading libraries
such as pthreads to monads to encapsulate concurrent computation,
as in SILL~\cite{Toninho13esop,Toninho15phd,Griffith16phd}, to
futures~\cite{Halstead85}. Many of these approaches have seen great practical
success, and yet from a theoretical perspective, they are often unsatisfying,
with the concurrent portion of the language being attached to the sequential
base language in a somewhat ad hoc manner, rather than having a coherent
theoretical backing for the language as a whole.

In order to give a more uniform approach to concurrency, we take the opposite
approach and begin with a language, \langname, whose semantics are naturally
concurrent. With a minor addition to \langname, we are able to force synchronization,
allowing us to encode sequentiality. In the resulting language, we can
model many different concurrency primitives, including futures, fork/join,
and concurrency monads. Moreover, as all of these constructs are encoded in
the same language, we can freely work with any mixture and retain the same
underlying semantics and theoretical underpinnings.

Two lines of prior research meet in the development of \langname. The first
involves a new presentation of intuitionistic logic, called the
\emph{semi-axiomatic sequent calculus} (\SAX)~\cite{DeYoung20fscd},
which combines features from Hilbert's axiomatic form~\cite{Hilbert34}
and Gentzen's sequent calculus~\cite{Gentzen35}. Cut reduction in the
semi-axiomatic sequent calculus can be put into correspondence with
asynchronous communication, either via message
passing~\cite{Pruiksma19places} or via shared
memory~\cite{DeYoung20fscd}. We build on the latter, extending it in
three major ways to get \langname. First, we extend from intuitionistic
logic to a semi-axiomatic presentation of adjoint
logic~\cite{Reed09un,Licata16lfcs,Licata17fscd,Pruiksma19places}, the second
major line of research leading to \langname. This gives  us a richer set of
connectives as well as the ability to work with linear and other substructural
types. Second, we add equirecursive types and recursively defined processes,
allowing for a broader range of programs, at the expense of termination, as usual.
Third, we add three new \emph{atomic write} constructs that write a value and its
tag in one step. This minor addition enables us to encode both some forms
of synchronization and sequential composition of processes, despite the naturally
concurrent semantics of \langname.

This resulting language is highly expressive. Using these features, we are able to
model functional programming with a semantics in destination-passing style
that makes memory explicit~\cite{Wadler84slfp,Larus89thesis,Cervesato02tr,Simmons12phd},
allowing us to write programs in more familiar functional syntax which can then
be expanded into \langname. We can also encode various forms of concurrency
primitives, such as fork/join parallelism~\cite{Conway63afips} implemented by
parallel pairs, futures~\cite{Halstead85}, and a concurrency monad in the style
of SILL~\cite{Toninho13esop,Toninho15phd,Griffith16phd}
(which combines sequential functional with concurrent session-typed programming).
As an almost immediate consequence of our reconstruction of futures,
we obtain a clean and principled subsystem of \emph{linear
  futures}, already anticipated and used in parallel complexity
analysis by Blelloch and Reid-Miller~\cite{Blelloch99tocs} without
being rigorously developed.

The principal contributions of this paper are:
\begin{enumerate}
\item the language \langname, which has a concurrent write-once
  shared-memory semantics for programs based on a
  computational interpretation of adjoint logic;
\item a model of sequential computation using an extension of this
  semantics with limited atomic writes;
\item a reconstruction of fork/join parallelism;
\item a reconstruction of futures, including a rigorous
  definition of linear futures;
\item a reconstruction of a concurrency monad which combines functional
  programming with session-typed concurrency as an instance of the
  adjoint framework;
\item the uniform nature of these reconstructions, which allows us
  to work with any of these concurrency primitives and more all within
  the same language.
\end{enumerate}

We begin by introducing the type system and syntax for \langname,
along with some background on adjoint logic (\cref{sec:type-system}),
followed by its semantics, which are naturally concurrent
(\cref{sec:sem-concur}). At this point, we are able to look at some
examples of programs in \langname. Next, we make the critical addition
of sequentiality (\cref{sec:sem-seq}), examining both what changes we
need to make to \langname\ to encode sequentiality and how we go about
that encoding. Using our encoding of sequentiality, we can build a
reconstruction of a standard functional language's lambda terms
(\cref{sec:functions}), which both serves as a simple example of a
reconstruction and illustrates that we need not restrict ourselves
to the relatively low-level syntax of {\langname} when writing programs.
Following this, we examine and reconstruct several concurrency primitives,
beginning with futures (\cref{sec:futures}), before moving on to parallel
pairs (an implementation of fork/join, in \cref{sec:fork-join}) and a
concurrency monad that borrows heavily from SILL (\cref{sec:sill}). We
conclude with a brief discussion of our results and future work.

\section{\langname: Types and Syntax}
\label{sec:type-system}

The type system and language we present here, which we will use throughout this paper,
are based on adjoint logic~\cite{Reed09un,Licata16lfcs,Licata17fscd,Pruiksma19places,Pruiksma20arxiv}
starting with a Curry-Howard interpretation, which we then leave behind by adding recursion, allowing
a richer collection of programs. Most of the details of adjoint logic are not relevant here,
and so we provide a brief overview of those that are, focusing on how they relate to our language.

In adjoint logic, propositions are stratified into distinct layers,
each identified by a \emph{mode}. For each mode $m$ there is a set
$\sigma(m) \subseteq \{W,C\}$ of structural properties satisfied by
antecedents of mode $m$ in a sequent. Here, $W$ stands for weakening
and $C$ for contraction. For simplicity, we always assume exchange is
possible. In addition, any instance of adjoint logic specifies a
preorder $m \geq k$ between modes, expressing that the proof of
a proposition $A_k$ of mode $k$ may depend on assumptions $B_m$. In
order for cut elimination (which forms a basis for our semantics)
to hold, this ordering must be compatible with the structural properties:
if $m \geq k$ then $\sigma(m) \supseteq \sigma(k)$.  Sequents then have
the form $\Gamma \vdash A_k$ where, critically, each antecedent $B_m$ in
$\Gamma$ satisfies $m \geq k$.  We express this concisely as
$\Gamma \geq k$.

We can go back and forth between the layers using \emph{shifts}
$\up_k^m A_k$ (up from $k$ to $m$ requiring $m \geq k$) and
$\down^r_m A_r$ (down from $r$ to $m$ requiring $r \geq m$). A given pair
$\up_k^m$ and $\down^m_k$ forms an adjunction, justifying the name
adjoint logic.

\begin{figure}[!htb]
\[
  \begin{array}{llcll}
    \mbox{Types} & A_m, B_m & ::= & \oplus_m\{\ell : A^\ell_m\}_{\ell \in L} & \mbox{internal choice} \\
                 & & \mid & A_m \tensor_m B_m & \mbox{multiplicative conjunction} \\
                 & & \mid & \one_m & \mbox{multiplicative unit} \\
                 & & \mid & \with_m\{\ell : A^\ell_m\}_{\ell \in L} & \mbox{external choice} \\
                 & & \mid & A_m \lolli_m B_m & \mbox{linear implication} \\ 
                 & & \mid & \down^r_m A_r & \mbox{down shift, $r \geq m$} \\
                 & & \mid & \up_k^m A_k & \mbox{up shift, $m \geq k$} \\
                 & & \mid & t & \mbox{type variables} \\[1em]
    \mbox{Processes} &
    P, Q & ::= & x_m \leftarrow P \semi Q & \mbox{cut: allocate $x$, spawn $P$, continue as $Q$} \\
                 & & \mid & x_m \leftarrow y_m & \mbox{id: copy or move from $y$ to $x$} \\
                 & & \mid & x_m.V & \mbox{write $V$ to $x$, or read $K$ from $x$ and pass it $V$} \\
                 & & \mid & \mb{case}\, x_m\, K & \mbox{write $K$ to $x$, or read $V$ from $x$ and pass it to $K$} \\
                 & & \mid & x_k \leftarrow p \leftarrow \overline{y_m} & \mbox{call $p$ with $\overline{y}$ binding $x$} \\[1em]
    \mbox{Values} &
    V & ::= & i(y_m) & \mbox{label $i$ tagging address $y$ ($\oplus$, $\with$)} \\
                 & & \mid & \langle w_m, y_m\rangle & \mbox{pair of addresses $w$ and $y$ ($\tensor$, $\lolli$)} \\
                 & & \mid & \langle\,\rangle & \mbox{unit value ($\one$)} \\
                 & & \mid & \mb{shift}(y_k) & \mbox{shifted address $y_k$ ($\down$, $\up$)} \\[1em]
    \mbox{Continuations} &
    K & ::= & (\ell(y_m) \Rightarrow P_\ell)_{\ell \in L} & \mbox{branch on $\ell \in L$ to bind $y$ ($\oplus$, $\with$)} \\
                 & & \mid & (\langle w_m, y_m\rangle \Rightarrow P) & \mbox{match against pair to bind $\langle w, y\rangle$ ($\tensor$, $\lolli$)} \\
                 & & \mid & (\langle\,\rangle \Rightarrow P) & \mbox{match against unit element ($\one$)} \\
                 & & \mid & (\mb{shift}(y_k) \Rightarrow P) & \mbox{match against $\mb{shift}$ to bind $y_k$ ($\down$, $\up$)}

  \end{array}
\]
\caption{Types and process expressions}
\label{fig:types-and-procs}
\end{figure}

\begin{figure}[!htb]
\begin{crules}{-5em}
  \infer[\m{cut}]
  {\Gamma_C, \Delta, \Delta' \vdash (x \leftarrow P \semi Q) :: (z : C_r)}
  {(\Gamma_C, \Delta \geq m \geq r)
    & \Gamma_C, \Delta \vdash P :: (x : A_m)
    & \Gamma_C, \Delta', x : A_m \vdash Q :: (z : C_r)}
  \qquad
  \infer[\m{id}]
  {\Gamma_W, y : A_m \vdash x \leftarrow y :: (x : A_m)}
  {\mathstrut}
  \\[1em]
  \infer[\m{call}]
  {\Gamma \vdash z \leftarrow p \leftarrow \overline{w} :: (z : A_k)}
  {\overline{B_m} \vdash p :: A_k \in \Sigma & \Gamma \vdash \overline{w : B_m}}
  \quad
  \infer[\m{call\_var}_\alpha]
  {\Gamma, x : B_m \vdash (\Delta, x : B_m)}
  {\Gamma, (x : B_m)^\alpha \vdash \Delta}
  \quad
  \infer[\m{call\_empty}]
  {\Gamma_W \vdash (\cdot)}
  {\mathstrut}
  \\[1em]
  \infer[{\oplus}R^0]
  {\Gamma_W, y : A_m^i \vdash x.i(y) :: (x : \oplus_m\{\ell : A^\ell_m\}_{\ell \in L})}
  {(i \in L)}
  \\[1em]
  \infer[{\oplus}L_\alpha]
  {\Gamma, x : \oplus_m\{\ell : A_m^\ell\}_{\ell \in L} \vdash \mb{case}\, x\, (\ell(y) \Rightarrow Q_\ell)_{\ell \in L}
    :: (z : C_r)}
  {\Gamma, (x : \oplus_m\{\ell : A_m^\ell\}_{\ell \in L})^\alpha,
    y : A_m^\ell \vdash Q_\ell :: (z : C_r) & \mbox{(for all $\ell \in L$)}}
  \\[1em]
  \infer[{\with}R]
  {\Gamma \vdash \mb{case}\, x\, (\ell(y) \Rightarrow P_\ell)_{\ell \in L} :: (x : {\with}_m\{\ell : A_m^\ell\}_{\ell \in L})}
  {\Gamma \vdash P_\ell :: (y : A_m^\ell) & \mbox{(for all $\ell \in L$)}}
  \qquad
  \infer[{\with}L^0]
  {\Gamma_W, x : {\with}_m\{\ell : A_m^\ell\}_{\ell \in L} \vdash x.i(y) :: (y : A_m^i)}
  {(i \in L)}
  \\[1em]
  \infer[{\one}R^0]
  {\cdot \vdash x.\langle\, \rangle :: (x : \one_m)}
  {\mathstrut}
  \qquad
  \infer[{\one}L_\alpha]
  {\Gamma, x : \one_m \vdash \mb{case}\, x\, (\langle\, \rangle \Rightarrow P) :: (z : C_r)}
  {\Gamma, (x : \one_m)^\alpha \vdash P :: (z : C_r)}
  \\[1em]
  \infer[{\lolli}R]
  {\Gamma \vdash \mb{case}\, x\, (\langle w,y\rangle \Rightarrow P) :: (x : A_m \lolli_m B_m)}
  {\Gamma, w : A_m \vdash P :: (y : B_m)}
  \qquad
  \infer[{\lolli}L^0]
  {\Gamma_W, w : A_m, x : A_m \lolli_m B_m \vdash x.\langle w,y\rangle :: (y : B_m)}
  {\mathstrut}
  \\[1em]
  \infer[{\tensor}R^0]
  {\Gamma_W, w : A_m, y : B_m \vdash x.\langle w,y\rangle :: (x : A_m \tensor_m B_m)}
  {\mathstrut}
  \qquad
  \infer[{\tensor}L_\alpha]
  {\Gamma, x : A_m \tensor_m B_m \vdash \mb{case}\, x\, (\langle w,y\rangle \Rightarrow P) :: (z : C_r)}
  {\Gamma, (x : A_m \tensor_m B_m)^\alpha, w : A_m, y : B_m \vdash P :: (z : C_r)}
  \\[1em]
  \infer[{\down}R^0]
  {\Gamma_W, y : A_m \vdash x_k.\mb{shift}(y_m) :: (x : \down^m_k A_m)}
  {\mathstrut}
  \quad
  \infer[{\down}L_\alpha]
  {\Gamma, x : \down^m_k A_m \vdash \mb{case}\, x_k\, (\mb{shift}(y_m) \Rightarrow Q) :: (z :: C_r)}
  {\Gamma, (x : \down^m_k A_m)^\alpha, y : A_m \vdash Q :: (z : C_r)}
  \\[1em]
  \infer[{\up}R]
  {\Gamma \vdash \mb{case}\, x_m\, (\mb{shift}(y_k) \Rightarrow P) :: (x : \up_k^m A_k)}
  {\Gamma \vdash P :: (y : A_k)}
  \qquad
  \infer[{\up}L^0]
  {\Gamma_W, x : \up_k^m A_k \vdash x_m.\mb{shift}(y_k) :: (y : A_k)}
  {\mathstrut}
\end{crules}%
\caption{Typing rules ($\alpha \in \{0,1\}$ with $\alpha = 1$
  permitted only if $C \in \sigma(m)$)}
\label{fig:typing-rules}
\end{figure}

Now, our types are the propositions of adjoint logic, augmented with general equirecursive
types formed via mutually recursive type definitions in a global signature --- most of the
basic types are familiar as propositions from intuitionistic linear logic~\cite{Girard87tapsoft},
or as session types~\cite{Honda93concur,Honda98esop,Gay10jfp}, tagged with subscripts for modes.
The only new types are the shifts $\down^r_m A_r$ and $\up_k^m A_k$. We do,
however, change $\oplus$ and $\with$ slightly, using an $n$-ary form rather than
the standard binary form. These are, of course, equivalent, but the $n$-ary form
allows for more natural programming. The grammar of types (as well as processes)
can be found in \cref{fig:types-and-procs}. Note that while our grammar includes
mode subscripts on types, type constructors, and variables we will often omit
them when they are clear from context.

The typing judgment for processes has the form
\[
  x_1:A^1_{m_1}, \ldots, x_n:A^n_{m_n} \vdash P :: (x : A_k)
\]
where $P$ is a process expression and we require that each $m_i \geq k$. Given such a judgment, we say that $P$
\emph{provides} or \emph{writes} $x$, and \emph{uses} or \emph{reads} $x_1, \ldots, x_n$.
The rules for this judgment can be found in \cref{fig:typing-rules}, where we have
elided a fixed signature $\Sigma$ with type and process definitions explained
later in this section.  As usual, we require each of the $x_i$ and $x$ to be
distinct and allow silent renaming of bound variables in process expressions.

In this formulation, contraction and weakening remain \emph{implicit}.\footnote{See~\cite{Pruiksma18un} for a
  formulation of adjoint logic with \emph{explicit} structural rules, which are less amenable to programming.}
Handling contraction leads us to two versions of each of the $\oplus, \one, \tensor$ left rules,
depending on whether the principal formula of the rule can be used again or not. The subscript $\alpha$ on each
of these rules may be either $0$ or $1$, and indicates whether the principal formula of the rule is
preserved in the context. The $\alpha = 0$ version of each rule is the standard linear form,
while the $\alpha = 1$ version, which requires that the mode $m$ of the principal formula satisfies
$C \in \sigma(m)$, keeps a copy of the principal formula. Note that if $C \in \sigma(m)$, we are
still allowed to use the $\alpha = 0$ version of the rule. Moreover, we write $\Gamma_C, \Gamma_W$
for contexts of variables all of which allow contraction or weakening, respectively. This allows us
to freely drop weakenable variables when we reach initial rules, or to duplicate
contractable variables to both parent and child when spawning a new process in the $\cut$ rule.

Note that there is no explicit rule for (possibly recursively defined) type variables $t$, since they
can be silently replaced by their definitions. Equality between types and type-checking can both easily
be seen to be decidable using a combination of standard techniques for substructural type
systems~\cite{Cervesato00tcs} and subtyping for equirecursive session types,~\cite{Gay05acta} which
relies on a coinductive interpretation of the types, but not on their linearity, and so can be adapted
to the adjoint setting. Some experience with a closely related algorithm~\cite{Das20fscd}
for type equality and type checking suggests that this is practical.

We now go on to briefly examine the terms and loosely describe their meanings from the perspective
of a shared-memory semantics. We will make this more precise in
\cref{sec:sem-concur,sec:sem-seq}, where we develop the dynamics of such a shared-memory semantics.

Both the grammar and the typing rules show that we have five primary constructs for processes,
which then break down further into specific cases.

The first two process constructs are 
type-agnostic. The $\cut$ rule, with term $x \leftarrow P \semi Q$,
allocates a new memory cell $x$, spawns a new process $P$ which may write to $x$,
and continues as $Q$ which may read from $x$. The new cell $x$ thus serves as
the point of communication between the new process $P$ and the continuing $Q$.
The $\id$ rule, with term $x_m \leftarrow y_m$, copies the contents of cell $y_m$ into
cell $x_m$. If $C \notin \sigma(m)$, we can think of this instead as
\emph{moving} the contents of cell $y_m$ into cell $x_m$ and freeing $y_m$.

The next two constructs, $x.V$ and $\mb{case}\, x\, K$, come in pairs that perform communication,
one pair for each type. A process of one of these forms will either write to or read from
$x$, depending on whether the variable is in the succedent (write) or antecedent (read).

A write is straightforward and stores either the value
$V$ or the continuation $K$ into the cell $x$, while a read pulls a continuation $K'$ or a
value $V'$ from the cell, and combines either $K'$ and $V$ (in the case of $x.V$) or
$K$ and $V'$ ($\mb{case}\, x\, K$). The symmetry of this, in which continuations and values
are both eligible to be written to memory and read from memory, comes from the duality between
$\oplus$ and $\with$, between $\otimes$ and $\lolli$, and
between $\down$ and $\up$. We see this in the typing rules, where, for instance,
$\oplus R^0$ and $\with L^0$ have the same process term, swapping only the roles of
each variable between read and write.
As cells may contain
either values $V$ or continuations $K$, it will be useful to have a way to refer to
this class of expression:
\[ \mbox{Cell contents} \qquad W ::= V \mid K \]

The final construct allows for calling named processes, which we use for recursion.
As is customary in session types, we use \emph{equirecursive types}, collected in a
signature $\Sigma$ in which we also collect recursive process definitions and their
types. For each type definition $t = A$, the type $A$ must be
\emph{contractive} so that we can treat types \emph{equirecursively} with a
straightforward coinductive definition and an efficient algorithm for
type equality~\cite{Gay05acta}.

A named process $p$ is \emph{declared} as
$B^1_{m_1}, \ldots, B^n_{m_n} \vdash p :: A_k$ which means it requires
arguments of types $B^i_{m_i}$ (in that order) and provides a result
of type $A_k$.  For ease of readability, we may sometimes write in
variables names as well, but they are actually only needed for the
corresponding \emph{definitions}
$x \leftarrow p \larrow y_1, \ldots, y_n = P$.
Operationally, a call $z \leftarrow p \larrow \overline{w}$
expands to its definition with a substitution
$[\overline{w}/\overline{y}, z/x]P$, replacing variables by addresses.

We can then formally define signatures as follows, allowing definitions
of types, declarations of processes, and definitions of processes:
\[
  \begin{array}{llcl}
    \mbox{Signatures} & \Sigma & ::= & \cdot \mid \Sigma, t = A
                                      \mid \Sigma, \overline{B_m} \vdash p :: A_k
                                      \mid \Sigma, x \leftarrow p \larrow \overline{y} = P
  \end{array}
\]
For valid signatures we require that each declaration
$\overline{B_m} \vdash p :: A_k$ has a corresponding definition
$x \leftarrow p \larrow \overline{y} = P$ with
$\overline{y : B_m} \vdash P :: (x : A_k)$.  This means that all
type and process definitions can be mutually recursive.

In the remainder of this paper we assume that we have a fixed valid
signature $\Sigma$, so we annotate neither the typing judgment nor the
computation rules with an explicit signature.

\section{Concurrent Semantics}
\label{sec:sem-concur}

We will now present a concurrent shared-memory semantics for {\langname},
using \emph{multiset rewriting rules}~\cite{Cervesato09ic}.
The state of a running program is a multiset of semantic objects, which
we refer to as a \emph{process configuration}. We have three distinct types of
semantic objects:
\begin{enumerate}
\item $\m{thread}(c_m, P)$: thread executing $P$ with destination $c_m$
\item $\m{cell}(c_m, \uscore)$: cell $c_m$ that has been allocated, but not yet written
\item $\bang_m \m{cell}(c_m, W)$: cell $c_m$ containing $W$
\end{enumerate}
Here, we prefix a semantic object with $\bang_m$ to indicate that it is persistent
when $C \in \sigma(m)$, and ephemeral otherwise. Note that empty cells are always
ephemeral, so that we can modify them by writing to them, while filled cells may be
persistent, as each cell has exactly one writer, which will terminate on writing.
We maintain the invariant that in a configuration either
$\m{thread}(c_m, P)$ appears together with $\bang_m \m{cell}(c_m, \uscore)$, or we
have just $\bang_m \m{cell}(c_m, W)$, as well as that if two semantic objects provide
the same address $c_m$, then they are exactly a
$\m{thread}(c_m, P)$, $\bang_m \m{cell}(c_m, \uscore)$ pair.
While this invariant can be made slightly cleaner by removing the $\bang_m \m{cell}(c_m, \uscore)$
objects, this leads to an interpretation where cells are allocated lazily just before
they are written. While this has some advantages, it is unclear how to inform the
thread which will eventually \emph{read from} the new cell where said cell can be found,
and so, in the interest of having a realistically implementable semantics, we just allocate
an empty cell on spawning a new thread, allowing the parent thread to see the location of that
cell.

We can then define configurations with the following grammar (and the additional constraint
of our invariant):
\[
  \begin{array}{llcl}
    \mbox{Configurations} & \CC & ::= & \cdot \mid \m{thread}(c_m, P), \m{cell}(c_m, \uscore)
                                       \mid \bang_m \m{cell}(c_m, W) \mid \CC_1, \CC_2
  \end{array}
\]
We think of the join $\CC_1, \CC_2$ of two configurations as a commutative and associative
operation so that this grammar defines a multiset rather than a list or tree.

A multiset rewriting rule takes the collection of objects on the
left-hand side of the rule, consumes them (if they are ephemeral), and then
adds in the objects on the right-hand side of the rule. Rules may
be applied to any subconfiguration, leaving the remainder of the configuration
unchanged. This yields a naturally nondeterministic semantics, but we will see
that the semantics are nonetheless confluent (\Cref{thm:diamond}).
Additionally, while our configurations are not ordered, we will adopt the
convention that the writer of an address appears to the left of any readers
of that address.

Our semantic rules are based on a few key ideas:
\begin{enumerate}
\item Variables represent addresses in shared memory.
\item Cut/spawn is the only way to allocate a new cell.
\item Identity/forward will move or copy data between cells.
\item A process $\m{thread}(c, P)$ will (eventually) write to
  the cell at address $c$ and then terminate.
\item A process $\m{thread}(d, Q)$ that is trying to read from
  $c \neq d$ will wait until the cell with address $c$
  is available (i.e. its contents is no longer $\uscore$),
  perform the read, and then continue.
\end{enumerate}

The counterintuitive part of this interpretation (when using a message-passing semantics as a point
of reference) is that a process providing $c : A \with B$ \emph{does not read a value from shared
  memory}. Instead, it writes a continuation to memory and terminates. Conversely, a client of such
a channel \emph{does not write a value to shared memory}. Instead, it continues by jumping to the
continuation. This ability to write continuations to memory is a major feature distinguishing this
from a message-passing semantics where potentially large closures would have to be captured,
serialized, and deserialized, the cost of which is difficult to control~\cite{Miller16onward}.

The final piece that we need to present the semantics is a key operation, namely that of passing a
value $V$ to a continuation $K$ to get a new process $P$. This operation is defined as follows:
\[
  \begin{array}{lcl@{\quad}l}
    i(d) \circ (\ell(y) \Rightarrow P_\ell)_{\ell \in L} & = & [d/y]P_i & (\oplus,\with) \\
    \langle e,c\rangle \circ (\langle w,y\rangle \Rightarrow P) & = & [e/w,c/y]P & (\tensor,\lolli) \\
    \langle\, \rangle \circ (\langle\, \rangle \Rightarrow P) & = & P & (\one) \\
    \mb{shift}(d) \circ  (\mb{shift}(y) \Rightarrow P)  & = & [d/y]P & (\down, \up) \\
  \end{array}
\]
When any of these reductions is applied, either the value or the continuation
has been read from a cell while the other is a part of the executing process.
With this notation, we can give a concise set of rules for the concurrent
dynamics. 
We present these rules in \cref{fig:dyn-concur}.

\begin{figure}[htb]
\[\begin{small}
  \begin{array}{ll}
    \m{thread}(c, x \leftarrow P \semi Q) \mapsto \m{thread}(a, [a/x]P), \m{cell}(a, \uscore),
    \m{thread}(c, [a/x]Q)
    \; \mbox{($a$ fresh)} & \mbox{\it cut: allocate \& spawn} \\[1ex]
    \bang_m \m{cell}(c_m,W), \m{thread}(d_m, d_m \leftarrow c_m), \m{cell}(d_m, \uscore)
    \mapsto \bang_m \m{cell}(d_m,W) & \mbox{\it id: move or copy} \\[1ex]
    \m{thread}(c, c \leftarrow p \larrow \overline{d}) \mapsto
    \m{thread}(c, [c/x, \overline{d/y}]P) \quad
    \mbox{for $x \leftarrow p \larrow \overline{y} = P \in \Sigma$}
    & \mbox{\it call} \\[1ex]
    \m{thread}(c_m, c_m.V), \m{cell}(c_m, \uscore) \mapsto \bang_m \m{cell}(c_m, V)
    & ({\oplus}R^0,{\tensor}R^0,{\one}R^0, {\down}R^0) \\
    \bang_m \m{cell}(c_m, V), \m{thread}(e_k, \mb{case}\, c_m\, K)
    \mapsto \m{thread}(e_k, V\circ K) & ({\oplus}L,{\tensor}L,{\one}L,{\down}L) \\[1ex]
    \m{thread}(c_m, \mb{case}\, c_m\, K), \m{cell}(c_m, \uscore)
    \mapsto \bang_m \m{cell}(c_m, K) & ({\lolli}R, {\with}R, {\up}R) \\
    \bang_m \m{cell}(c_m, K), \m{thread}(d_k, c_m.V) \mapsto \m{thread}(d_k, V\circ K)
    &({\lolli}L^0,{\with}L^0,{\up}L^0)
  \end{array}
  \end{small}
\]
\caption{Concurrent dynamic rules}
\label{fig:dyn-concur}
\end{figure}

These rules match well with our intuitions from before. In the cut rule, we allocate a
new empty cell $a$, spawn a new thread to execute $P$, and continue executing $Q$, just as we
described informally in \cref{sec:type-system}. Similarly, in the id rule, we either
move or copy (depending on whether $C \in \sigma(m)$) the 
contents of cell $c$ into cell $d$ and terminate.
The rules that write values to cells are exactly the right rules for positive types ($\oplus, \otimes, \one, \down$),
while the right rules for negative types ($\with, \lolli, \up$) write continuations to cells
instead. Dually, to read from a cell of positive type, we must have a continuation to pass the
stored value to, while to read from a cell of negative type, we need a value to pass to the
stored continuation.

\subsection{Results}


We have standard results for this system --- a form of progress, of preservation, and
a confluence result. To discuss progress and preservation, we must first extend our
notion of typing for process terms to configurations.
Configurations are typed with the judgment $\Gamma \vdash \CC :: \Delta$ which
means that configuration $\CC$ may read from the addresses in $\Gamma$ and
write to the addresses in $\Delta$.
We can then give the following set of rules for typing configurations, which make use
of the typing judgment $\Gamma \vdash P :: (c : A_m)$ for process terms in the base cases.
\begin{rules}
\infer[]
  {\Gamma_C, \Gamma, \Delta \vdash \m{thread}(c,P), \m{cell}(c, \uscore) :: (\Gamma_C, \Delta, c : A_m)}
  {\Gamma_C, \Gamma \vdash P :: (c : A_m)}
\\[1em]
\infer[]
  {\Gamma_C, \Gamma, \Delta \vdash \bang_m \m{cell}(c,V) :: (\Gamma_C, \Delta, c : A_m)}
  {\Gamma_C, \Gamma \vdash c.V :: (c : A_m)}
  \qquad
\infer[]
  {\Gamma_C, \Gamma, \Delta \vdash \bang_m \m{cell}(c,K) :: (\Gamma_C, \Delta, c : A_m)}
  {\Gamma_C, \Gamma \vdash \mb{case}\, c\, K :: (c : A_m)}
\\[1em]
\infer[]
  {\Delta \vdash (\cdot) :: \Delta}
  {\mathstrut}
\qquad
\infer[]
  {\Gamma \vdash \CC_1, \CC_2 :: \Delta_2}
  {\Gamma \vdash \CC_1 :: \Delta_1
  & \Delta_1 \vdash \CC_2 :: \Delta_2}
\end{rules}%
Note that our invariants on configurations mean that there is no need to separately
type the objects $\m{thread}(c, P)$ and $\m{cell}(c, \uscore)$, as they can only
occur together. Additionally, while our configurations are multisets, and therefore
not inherently ordered, observe that the typing derivation for a configuration induces
an order on the configuration, something which is quite useful in proving progress.
\footnote{This technique is used in \cite{DeYoung20fscd} to prove progress for
a similar language.}

Our preservation theorem differs slightly from the standard, in that
it allows the collection of typed channels $\Delta$ offered by a configuration
$\CC$ to grow after a step, as steps may introduce new persistent memory cells.
\begin{theorem}[Type Preservation]
  If $\Gamma \vdash \CC :: \Delta$ and $\CC \mapsto \CC'$
  then $\Gamma \vdash \CC' :: \Delta'$ for some $\Delta' \supseteq \Delta$.
\end{theorem}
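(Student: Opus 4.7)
\medskip

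\noindent\textbf{Proof plan.} I would proceed by case analysis on the reduction $\CC \mapsto \CC'$, with an outer induction on the configuration typing derivation used to locate the redex. Since rewriting is applied to a subconfiguration, the typing rule for composition
$\Gamma \vdash \CC_1, \CC_2 :: \Delta_2$ lets me decompose $\Gamma \vdash \CC :: \Delta$ as $\Gamma \vdash \CC_L :: \Delta_1$, $\Delta_1 \vdash \CC_{\mathit{site}} :: \Delta_2$, $\Delta_2 \vdash \CC_R :: \Delta$, where $\CC_{\mathit{site}}$ contains exactly the objects consumed by the rule. Thus it suffices to show that if $\Delta_1 \vdash \CC_{\mathit{site}} :: \Delta_2$ and $\CC_{\mathit{site}} \mapsto \CC'_{\mathit{site}}$, then $\Delta_1 \vdash \CC'_{\mathit{site}} :: \Delta'_2$ for some $\Delta'_2 \supseteq \Delta_2$; the composition rule then rebuilds the full typing, with the new cells (if any) exported through $\CC_R$.

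Before the case analysis I would prove three auxiliary lemmas: (i) a standard substitution lemma for the process typing judgment, needed for renaming $x$ to the fresh $a$ in the cut rule and for renaming the destination in the call rule; (ii) an application lemma, stating that if $\Gamma_V \vdash c.V :: (c : A_m)$ and $\Gamma_K, c : A_m \vdash \mb{case}\, c\, K :: (z : C_r)$, then $\Gamma_V, \Gamma_K \vdash V \circ K :: (z : C_r)$, which handles all of the read cases; and (iii) inversion lemmas for the configuration typing rules, which identify the premise used to type each $\m{thread}$/$\m{cell}$ pair and each persistent cell.

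The case analysis then runs as follows. For \emph{cut}, inversion of the thread typing yields $\Gamma_C, \Delta_P \vdash P :: (x : A_m)$ and $\Gamma_C, \Delta_Q, x : A_m \vdash Q :: (c : C_r)$; I rename $x$ to the fresh $a$ by substitution and assemble $\m{thread}(a, [a/x]P), \m{cell}(a, \uscore)$ followed by $\m{thread}(c, [a/x]Q), \m{cell}(c, \uscore)$ using two applications of the configuration typing rules. Here $\Delta$ grows by the new address $a : A_m$, which is exactly the slack the theorem allows. For \emph{id}, I invert to get $\Gamma_V \vdash c.V :: (c : A_m)$ (or the $\mb{case}$ form) and retype the merged cell $\bang_m \m{cell}(d, W)$ by observing that the right-rule used is $x$-generic, so simply changing the destination from $c$ to $d$ (after a variable renaming when the contents mention $c$ via $\alpha$-conversion) yields the required premise; when $C \notin \sigma(m)$ the source cell $c$ is ephemeral and is genuinely removed, which matches the linear consumption of $c$ in the thread for $d \leftarrow c$. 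The \emph{call} case is immediate from the substitution lemma applied to the definition in $\Sigma$. The positive \emph{write} cases $({\oplus}R^0, {\tensor}R^0, {\one}R^0, {\down}R^0)$ preserve the typing derivation essentially verbatim, since the premise used to type $\m{thread}(c, c.V)$ now types $\bang_m \m{cell}(c, V)$; the negative \emph{write} cases are symmetric. The \emph{read} cases follow from lemma (ii): inversion gives the typings of $c.V$ (stored in the cell) and of $\mb{case}\, c\, K$ (in the thread), and the application lemma produces a typing for $V \circ K$.

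The main obstacle I anticipate is the bookkeeping in the \emph{cut} case, specifically arguing that the freshly allocated address $a$ can be legitimately introduced into $\Delta'$: I need to make sure $\CC_R$ does not already mention $a$ (which is straightforward from freshness) and that the preorder constraint $\Gamma_C, \Delta_P \geq m \geq r$ supplied by the cut typing rule is compatible with slotting $a : A_m$ into the surrounding context. A secondary subtlety arises in the \emph{id} case for persistent modes: because $\bang_m \m{cell}(c, W)$ is preserved while $\bang_m \m{cell}(d, W)$ is introduced, I must argue that both cells can coexist in the resulting $\Delta'$, again relying on weakening being available at mode $m$ (since $C \in \sigma(m)$ implies $W \in \sigma(m)$ is \emph{not} automatic — here I would either restrict to contexts where both hold or observe that the invariant on configurations still type-checks because the source cell simply remains readable). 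Everything else is mechanical case-chasing once the three lemmas are in hand.
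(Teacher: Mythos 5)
Your plan matches the paper's proof essentially exactly: the paper also argues by cases on the transition relation, using repeated inversion on the configuration typing to reassemble a derivation for $\CC'$, with $\Delta$ allowed to grow by newly persistent cells. The one ingredient the paper names explicitly that you leave implicit is a lemma that non-interfering threads and cells can be \emph{exchanged} in a typing derivation --- your decomposition of $\CC$ into $\CC_L, \CC_{\mathit{site}}, \CC_R$ with the redex objects adjacent is only available up to such exchange, since a cell and the thread reading it need not be contiguous in the order induced by the typing derivation.
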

\begin{proof}
  By cases on the transition relation for configurations, applying
  repeated inversions to the typing judgment on $\CC$ to obtain the
  necessary information to assemble a typing derivation for $\CC'$.
  This requires some straightforward lemmas expressing that
  non-interfering processes and cells can be exchanged in a typing
  derivation.
\end{proof}

Progress is entirely standard, with configurations comprised entirely of
filled cells taking the role that values play in a functional language.
\begin{theorem}[Progress]
  If $\cdot \vdash \CC :: \Delta$ then either
  \begin{enumerate}
  \item[(i)] $\CC \mapsto \CC'$ for some $\CC'$, or
  \item[(ii)] for every channel $c_m : A_m \in \Delta$ there is an
    object $\bang_m \m{cell}(c_m,W) \in \CC$.
  \end{enumerate}
\end{theorem}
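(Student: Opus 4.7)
The plan is to induct on the derivation of the configuration typing $\Gamma \vdash \CC :: \Delta$, but since the join rule splits into $\Gamma \vdash \CC_1 :: \Delta_1$ and $\Delta_1 \vdash \CC_2 :: \Delta_2$, where the right premise has a non-empty input context, I first strengthen the statement. The generalization I would prove is: if $\Gamma \vdash \CC :: \Delta$ and every channel in $\Gamma$ is backed by a filled cell (in $\CC$ itself or in some ambient configuration sitting to the left), then either $\CC$ steps, or every channel in $\Delta \setminus \Gamma$ is provided by a filled cell already present in $\CC$. Taking $\Gamma = \cdot$ recovers the theorem exactly, since then $\Delta \setminus \Gamma = \Delta$ and the backing hypothesis is vacuous.

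The base case of the empty configuration is immediate. For a filled-cell object $\bang_m \m{cell}(c, W)$, the only channel in $\Delta \setminus \Gamma$ is $c : A_m$ itself, which is filled by construction, so case (ii) holds. For a $\m{thread}(c,P), \m{cell}(c,\uscore)$ object, I do a case analysis on $P$, using the typing $\Gamma_C, \Gamma \vdash P :: (c : A_m)$ obtained by inversion. The constructs $\cut$, $\m{call}$, and all writes to the provided address $c$ (whether value writes for positive types or continuation writes for negative types) step unconditionally. The identity $c \leftarrow y$ requires a filled cell at $y$; but $y \in \Gamma$, which is backed by hypothesis, so the id rule fires. The remaining forms are reads from some $x \in \Gamma$: the typing rule used ($\oplus L$, $\tensor L$, $\one L$, $\down L$ versus $\with L^0$, $\lolli L^0$, $\up L^0$) pins down whether the thread expects a value cell or a continuation cell, and this matches the form of the filled cell guaranteed to exist at $x$ by the backing hypothesis together with its type, so the corresponding read rule applies.

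For the inductive join case $\Gamma \vdash \CC_1, \CC_2 :: \Delta_2$, I apply the induction hypothesis to $\CC_1$ with the same backing for $\Gamma$. If $\CC_1$ steps, so does $\CC_1, \CC_2$ and we are done. Otherwise every channel in $\Delta_1 \setminus \Gamma$ is provided by a filled cell in $\CC_1$; combined with the backing for $\Gamma$ (whose surviving channels pass through the thread/cell rules into $\Delta_1$ unchanged), this gives a backing for all of $\Delta_1$. I then apply the induction hypothesis to $\CC_2$ using this backing. If $\CC_2$ steps, done; otherwise every channel in $\Delta_2 \setminus \Delta_1$ is filled in $\CC_2$, and together with the filled cells in $\CC_1$ providing $\Delta_1 \setminus \Gamma$, we account for every channel in $\Delta_2 \setminus \Gamma$.

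The main obstacle is not conceptual but bookkeeping: I need to know that when a thread in a subconfiguration wants to read from some $y \in \Gamma$, the \emph{shape} of contents in the filled cell at $y$ (value versus continuation) matches what the read rule expects. This is forced by the polarity of $y$'s type: positive-type cells always store values (since only the right rules $\oplus R^0, \tensor R^0, \one R^0, \down R^0$ write them) and negative-type cells always store continuations, a property encoded in the two separate typing rules for filled cells. A secondary point is the left-of-writer convention mentioned just before the theorem: it guarantees that in the join case the writers for $\Delta_1$ really do live in $\CC_1$ rather than being hidden inside $\CC_2$, which is what makes the inductive threading of the backing invariant go through.
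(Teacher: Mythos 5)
Your proposal is correct and is essentially the paper's own argument: both proceed by induction over the configuration typing derivation, exploit the order that derivation induces so that a thread's inputs are backed by filled cells to its left, and use inversion together with the polarity of stored contents (values at positive types, continuations at negative types) to show the matching transition rule fires. The only difference is bookkeeping in the join case --- the paper first re-associates all joins to the left so that the unstrengthened statement works directly as the induction hypothesis on the left subconfiguration, whereas you keep the derivation tree as is and instead strengthen the statement with an explicit backing hypothesis for $\Gamma$; these are two equivalent ways of threading the same invariant (and your closing appeal to the writer-left-of-reader \emph{convention} is harmless but unnecessary, since it is the typing derivation itself, not the notational convention, that supplies the ordering).
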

\begin{proof}
  We first re-associate all applications of the typing rule for
  joining configurations to the left.  Then we perform an induction
  over the structure of the resulting derivation, distinguishing cases
  for the rightmost cell or thread and potentially applying the
  induction hypothesis on the configuration to its left.  This
  structure, together with inversion on the typing of the cell or
  thread yields the theorem.
\end{proof}

In addition to these essential properties, we also have a confluence
result, for which we need to define a weak notion of equivalence on
configurations. We say $\CC_1 \sim \CC_2$
if there is a renaming $\rho$ of addresses such that
$\rho\CC_1 = \CC_2$. We can then establish the following version
of the diamond property:
\begin{theorem}[Diamond Property]
\label{thm:diamond}
  Assume $\Delta \vdash \CC :: \Gamma$.  If
  $\CC \mapsto \CC_1$ and $\CC \mapsto \CC_2$ such
  that $\CC_1 \not\sim \CC_2$.  Then there exist $\CC_1'$ and $\CC_2'$ such that
  $\CC_1 \mapsto \CC_1'$ and $\CC_2 \mapsto \CC_2'$
  with $\CC_1' \sim \CC_2'$.
\end{theorem}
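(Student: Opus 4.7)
The plan is to prove the diamond property by case analysis on the pair of rewrites $r_1 : \CC \mapsto \CC_1$ and $r_2 : \CC \mapsto \CC_2$. For each rule in \cref{fig:dyn-concur} I would record the multiset of semantic objects it consumes (those on its left-hand side which are not $\bang_m$-persistent) and those it produces; then for a given pair $(r_1,r_2)$ the goal is to show that the objects consumed by $r_2$ still remain after firing $r_1$ (and symmetrically), so that $r_2$ can fire in $\CC_1$ yielding some $\CC_1'$ and $r_1$ can fire in $\CC_2$ yielding some $\CC_2'$, with the two orderings producing configurations that agree up to renaming of any fresh addresses introduced by cut.

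The essential observation is that the configuration invariants together with the substructural typing $\Delta \vdash \CC :: \Gamma$ tightly constrain what objects two simultaneously applicable rewrites can share. Each rule consumes exactly one thread, and the shape of a thread's body determines uniquely (up to the choice of fresh name in cut) which rule can fire on it, so $r_1$ and $r_2$ must act on distinct threads. Each empty cell $\m{cell}(c_m, \uscore)$ is paired by invariant with its unique writer thread $\m{thread}(c_m, P)$, and every rule that consumes such an empty cell also consumes its paired writer thread; hence $r_1$ and $r_2$ cannot share an empty cell. Finally, for an ephemeral filled cell $\bang_m \m{cell}(c_m, W)$ with $C \notin \sigma(m)$, the substructural discipline forces $c_m$ to appear in the input context of at most one object of $\CC$, so at most one rewrite can consume such a cell.

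Combining these facts, the only objects that $r_1$ and $r_2$ can share are persistent filled cells $\bang_m \m{cell}(c_m, W)$ with $C \in \sigma(m)$, which by definition are not actually removed from the configuration. Hence the multiset of objects consumed by $r_2$ is still present in $\CC_1$, so $r_2$ fires on $\CC_1$ to give $\CC_1'$, and symmetrically $\CC_2 \mapsto \CC_2'$ via $r_1$. The two final configurations agree except possibly on the choice of fresh addresses introduced by any instances of the cut rule, so a suitable renaming $\rho$ witnesses $\CC_1' \sim \CC_2'$. The assumption $\CC_1 \not\sim \CC_2$ rules out the degenerate case in which $r_1$ and $r_2$ are the same cut on the same thread with differently chosen fresh names, for which the conclusion trivially holds without any additional step.

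The main technical obstacle is making the uniqueness claim for ephemeral readers precise: it is not immediate from the rules typing $\Gamma \vdash \CC :: \Delta$ as stated, and will require an auxiliary inversion lemma that tracks how a linear address is threaded through exactly one input context along any well-typed configuration derivation. Once this uniqueness property is in hand, the remaining case analysis is large but mechanical, with each pair of rule applications falling under either the disjoint-objects case (where commutation is immediate) or the shared-persistent-cell case (where both rewrites access a cell that neither removes).
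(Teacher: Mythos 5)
Your proposal is correct and takes essentially the same approach as the paper, whose proof is just the terse observation that the argument is ``straightforward by cases'' because there are no critical pairs involving ephemeral objects on the left-hand sides of the transition rules. You have in fact filled in exactly the details that observation leaves implicit --- that distinct rewrites act on distinct threads, that empty cells are tied to their unique writer by the configuration invariant, that linearity of the typing judgment gives at most one consumer of an ephemeral filled cell (the auxiliary inversion lemma you flag is indeed needed and is true of the configuration typing rules), and that the $\CC_1 \not\sim \CC_2$ hypothesis disposes of the two-fresh-names-for-one-cut degeneracy.
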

\begin{proof}
  The proof is straightforward by cases.  There are no critical pairs
  involving ephemeral (that is, non-persistent) objects in the
  left-hand sides of transition rules.
\end{proof}


\subsection{Examples}
\label{sec:examples}


We present here a few examples of concurrent programs, illustrating
various aspects of our language.

\paragraph{Example: Binary Numbers.}
As a first simple example we consider binary numbers, defined as a
type $\mi{bin}$ at mode $m$.  The structural properties of mode $m$
are arbitrary for our examples.  For concreteness, assume that $m$ is
\emph{linear}, that is, $\sigma(m) = \{\,\}$.
\[
  \mi{bin}_m = {\oplus}_m\{\m{b0} : \mi{bin}_m, \m{b1} : \mi{bin}_m, \m{e} : \one_m\}
\]
Unless multiple modes are involved, we will henceforth omit the mode
$m$.  As an example, the number $6 = (110)_2$ would be represented by
a sequence of labels $\m{e}, \m{b1}, \m{b1}, \m{b0}$, chained together
in a linked list. The first cell in the list would contain the bit
$\m{b0}$. It has some address $c_0$, and also contains an address
$c_1$ pointing to the next cell in the list. Writing out the whole
sequence as a configuration we have
\begin{rules}
  \m{cell}(c_4, \langle\,\rangle),
  \m{cell}(c_3, \m{e}(c_4)),
  \m{cell}(c_2, \m{b1}(c_3)),
  \m{cell}(c_1, \m{b1}(c_2)),
  \m{cell}(c_0, \m{b0}(c_1))
\end{rules}%

\paragraph{Example: Computing with Binary Numbers.}
We implement a recursive process $\mi{succ}$ that reads the bits of a
binary number $n$ starting at address $y$ and writes the bits for the
binary number $n+1$ starting at $x$.  This process may block until the
input cell (referenced as $y$) has been written to; the output cells
are allocated one by one as needed.  Since we assumed the mode $m$ is
linear, the cells read by the $\mi{succ}$ process from will be
deallocated.

\begin{tabbing}
  $\mi{bin}_m = {\oplus}_m\{\m{b0} : \mi{bin}_m, \m{b1} : \mi{bin}_m, \m{e} : \one_m\}$ \\
  $(y : \mi{bin}) \vdash \mi{succ} :: (x : \mi{bin})$ \\
  $x \leftarrow \mi{succ} \larrow y =$ \\
  \qquad $\mb{case}\; y$ \= $(\, \m{b0}(y') \Rightarrow $ \= $x' \leftarrow (x' \leftarrow y') \semi$ \hspace{3em}
  \= \% \it alloc $x'$ and copy $y'$ to $x'$ \\
  \>\> $x.\m{b1}(x')$ \> \% \it write $\m{b1}(x')$ to $x$ \\
  \> $\mid \m{b1}(y') \Rightarrow $ \= $x' \leftarrow (x' \leftarrow \mi{succ} \larrow y') \semi$
  \> \% \it alloc $x'$ and spawn $\mi{succ}\; y'$ \\
  \>\> $x.\m{b0}(x')$ \> \% \it write $\m{b0}(x')$ to $x$ \\
  \> $\mid \m{e}(y') \Rightarrow $ \= $x'' \leftarrow (x'' \leftarrow y') \semi $ \> \% \it alloc $x''$ and copy $y'$ to $x''$ \\
  \>\> $x' \leftarrow x'.\m{e}(x'') \semi $ \> \% \it alloc $x'$ and write $\m{e}(x'')$ to $x'$ \\
  \>\> $x.\m{b1}(x')\,)$ \> \% \it write $\m{b1}(x')$ to $x$
\end{tabbing}
%
In this example and others we find certain repeating patterns.
Abbreviating these makes the code easier to read and also more compact
to write.  As a first simplification, we can use the following
shortcuts:
\[
  \begin{array}{lcll}
    x \leftarrow y \semi Q & \triangleq & x\leftarrow (x \leftarrow y) \semi Q \\
    x \leftarrow f \larrow \overline{y} \semi Q & \triangleq & x \leftarrow (x \leftarrow f \larrow \overline{y}) \semi Q
  \end{array}
\]
With these, the code for successor becomes
\begin{tabbing}
  $x \leftarrow \mi{succ} \larrow y =$ \\
  \qquad $\mb{case}\; y$ \= $(\, \m{b0}(y') \Rightarrow $ \= $x' \leftarrow y'\semi$ \hspace{3em}
  \= \% \it alloc $x'$ and copy $y'$ to $x'$ \\
  \>\> $x.\m{b1}(x')$ \> \% \it write $\m{b1}(x')$ to $x$ \\
  \> $\mid \m{b1}(y') \Rightarrow $ \= $x' \leftarrow \mi{succ} \larrow y' \semi$
  \> \% \it alloc $x'$ and spawn $\mi{succ}\; y'$ \\
  \>\> $x.\m{b0}(x')$ \> \% \it write $\m{b0}(x')$ to $x$ \\
  \> $\mid \m{e}(y') \Rightarrow $ \= $x'' \leftarrow y' \semi$ \> \% \it alloc $x''$ and copy $y'$ to $x''$ \\
  \>\> $x' \leftarrow x'.\m{e}(x'') \semi$ \> \% \it alloc $x'$ and write $\m{e}(x'')$ to $x'$ \\
  \>\> $x.\m{b1}(x')\,)$ \> \% \it write $\m{b1}(x')$ to $x$
\end{tabbing}
The second pattern we notice are sequences of allocations
followed by immediate (single) uses of the new address.
We can collapse these by a kind of specialized substitution.
We describe the inverse, namely how the abbreviated notation
is \emph{elaborated} into the language primitives.
\[
  \begin{array}{llcl}
    \mbox{Value Sequence} & \bar{V} & ::= & i(\bar{V}) \mid (y,\bar{V}) \mid \mb{shift}(\bar{V}) \mid V
  \end{array}
\]
At positive types (${\oplus},{\tensor},{\one},{\down}$), which
write to the variable $x$ with $x.\bar{V}$, we define:
\[
  \begin{array}{lcll}
    x\mathrel{.} i(\bar{V}) & \triangleq & x' \leftarrow x'\mathrel{.} \bar{V} \semi x.i(x') & (\oplus) \\
    x\mathrel{.} \langle y,\bar{V}\rangle & \triangleq & x' \leftarrow x'\mathrel{.} \bar{V} \semi x.\langle y,x'\rangle & (\tensor) \\
    x\mathrel{.} \mb{shift}(\bar{V}) & \triangleq & x' \leftarrow x'\mathrel{.} \bar{V} \semi x.\mb{shift}(x') & (\down) \\
  \end{array}
\]
In each case, and similar definitions below, $x'$ is a fresh variable.
Using these abbreviations in our example, we can shorten it further.
\begin{tabbing}
  $x \leftarrow \mi{succ} \larrow y =$ \\
  \qquad $\mb{case}\; y$ \= $(\, \m{b0}(y') \Rightarrow $ \= $x.\m{b1}(y')$ \hspace{3em}
  \= \% \it write $\m{b1}(y')$ to $x$ \\
  \> $\mid \m{b1}(y') \Rightarrow $ \= $x' \leftarrow \mi{succ} \larrow y' \semi$
  \> \% \it alloc $x'$ and spawn $\mi{succ}\; y'$ \\
  \>\> $x.\m{b0}(x')$ \> \% \it write $\m{b0}(x')$ to $x$ \\
  \> $\mid \m{e}(y') \Rightarrow $ \= $x.\m{b1}(\m{e}(y'))\,)$ \> \% \it write $\m{b1}(\m{e}(y'))$ to $x$
\end{tabbing}
For negative types (${\with},{\lolli},{\up}$) the expansion is
symmetric, swapping the left- and right-hand sides of the cut.  This
is because these constructs read a continuation from memory at $x$ and
pass it a value.
\[
  \begin{array}{lcll}
    x\mathrel{.} i(\bar{V}) & \triangleq & x' \leftarrow x.i(x') \semi x'\mathrel{.} \bar{V} & (\with) \\
    x\mathrel{.} \langle y,\bar{V}\rangle & \triangleq & x' \leftarrow x.\langle y,x'\rangle \semi x'\mathrel{.} \bar{V} & (\lolli) \\
    x\mathrel{.} \mb{shift}(\bar{V}) & \triangleq & x' \leftarrow x.\mb{shift}(x') \semi x'\mathrel{.} \bar{V} & (\up) 
  \end{array}
\]
Similarly, we can decompose a continuation matching against a value
sequence $(\bar{V} \Rightarrow P)$.  For simplicity, we assume here
that the labels for each branch of a pattern match for internal
($\oplus$) or external ($\with$) choice are distinct; a generalization
to nested patterns is conceptually straightforward but syntactically
somewhat complex so we do not specify it formally.
\[
  \begin{array}{lcll}
    (\ell(\bar{V}_\ell) \Rightarrow P_\ell)_{\ell \in L}
    & \triangleq & (\ell(x') \Rightarrow \mb{case}\, x'\, (\bar{V}_\ell \Rightarrow P_\ell))_{\ell \in L} & (\oplus,\with) \\
    (\langle y,\bar{V}\rangle \Rightarrow P)
    & \triangleq & (\langle y,x'\rangle \Rightarrow \mb{case}\, x'\, (\bar{V} \Rightarrow P)) & (\tensor,\lolli) \\
    (\mb{shift}(\bar{V}) \Rightarrow P)
    & \triangleq & (\mb{shift}(x') \Rightarrow \mb{case}\, x'\, (\bar{V} \Rightarrow P)) & (\down,\up) 
  \end{array}
\]
For example, we can rewrite the successor program one more time
to express that $y'$ in the last case must actually contain
the unit element $\langle\,\rangle$ and match against it as
well as construct it on the right-hand side.
\begin{tabbing}
  $x \leftarrow \mi{succ} \larrow y =$ \\
  \qquad $\mb{case}\; y$ \= $(\, \m{b0}(y') \Rightarrow $ \= $x.\m{b1}(y')$ \hspace{3em}
  \= \% \it write $\m{b1}(y')$ to $x$ \\
  \> $\mid \m{b1}(y') \Rightarrow $ \= $x' \leftarrow \mi{succ} \larrow y' \semi$
  \> \% \it alloc $x'$ and spawn $\mi{succ}\; y'$ \\
  \>\> $x.\m{b0}(x')$ \> \% \it write $\m{b0}(x')$ to $x$ \\
  \> $\mid \m{e}\, \langle\,\rangle \Rightarrow $ \= $x.\m{b1}(\m{e}\, \langle\,\rangle)\,)$ \> \% \it write $\m{b1}(\m{e}\, \langle\,\rangle)$ to $x$
\end{tabbing}
We have to remember, however, that intermediate matches and
allocations still take place and the last two programs are \emph{not
  equivalent} in case the process with destination $y'$ does not
terminate.

To implement $\mi{plus2}$ we can just compose $\mi{succ}$ with itself.
\begin{tabbing}
  $(z : \mi{bin}) \vdash \mi{plus2} :: (x : \mi{bin})$ \\[1ex]
  $x \leftarrow \mi{plus2} \larrow z =$ \\
  \qquad $y \leftarrow \mi{succ}\; z \semi$ \\
  \qquad $x \leftarrow \mi{succ}\; y$
\end{tabbing}
In our concurrent semantics, the two successor processes form a
concurrently executing pipeline --- the first reads the initial number
from memory, bit by bit, and then writes a new number (again, bit by
bit) to memory for the second successor process to read.

\paragraph{Example: MapReduce.}
As a second example we consider $\mi{mapReduce}$ applied to a tree.
We have a neutral element $z$ (which stands in for every leaf) and a
process $f$ to be applied at every node to reduce the whole tree to a
single value.  This exhibits a high degree of parallelism, since the
operations on the left and right subtree can be done independently.
We abstract over the type of element $A$ and the result $B$ at the
meta-level, so that $\m{tree}_A$ is a family of types, and
$\mi{mapReduce}_{AB}$ is a family of processes, indexed by $A$ and
$B$.
\begin{tabbing}
  $\mi{tree}_A = {\oplus}_m\{\m{empty} : \one, \m{node} : \mi{tree}_A \tensor A \tensor \mi{tree}_A\}$
\end{tabbing}
Since $\mi{mapReduce}$ applies reduction at every node in the
tree, it is \emph{linear} in the tree.  On the other hand,
the neutral element $z$ is used for every leaf, and the
associative operation $f$ for every node, so $z$ requires
at least contraction (there must be at least one leaf)
and $f$ both weakening and contraction (there may be arbitrarily
many nodes).  Therefore we use three modes: the linear mode $m$
for the tree and the result of $\mi{mapReduce}$, a strict
mode $s$ for the neutral element $z$, and an unrestricted
mode $u$ for the operation applied at each node.

\begin{tabbing}
  $(z : \up^s_m B)\, (f : \up^u_m (B \tensor A \tensor B \lolli B))\, (t : \mi{tree}_A)
  \vdash \mi{mapReduce}_{AB} :: (s : B)$ \\[1ex]
  $s \leftarrow \mi{mapReduce}_{AB} \larrow z\; f\; t =$ \\
  \qquad $\mb{case}\; t\; $ \= $(\, \m{empty}\,\langle\,\rangle \Rightarrow $ \= $ z' \leftarrow z.\mb{shift}(z')$
  \hspace{7em} \= \% \it drop $f$ \\
  \>\> $s \leftarrow z'$ \\
  \> $\mid \m{node}\,\langle l,\langle x,r\rangle\rangle \Rightarrow $ \= $l' \leftarrow \mi{mapReduce}_{AB}\; z\; f\; l \semi$ \\
  \>\> $r' \leftarrow \mi{mapReduce}_{AB}\; z\; f\; r \semi$ \> \% \it duplicate $z$ and $f$ \\
  \>\> $p \leftarrow p.\langle l',\langle x,r'\rangle\rangle \semi$ \\
  \>\> $s' \leftarrow f.\mb{shift}\langle p,s'\rangle \semi$ \\
  \>\> $s \leftarrow s'\, )$
\end{tabbing}


\paragraph{Example: $\lambda$-Calculus.}
As a third example we show an encoding of the $\lambda$-calculus
using higher-order abstract syntax and parallel evaluation.
We specify, at an arbitrary mode $m$:
\begin{tabbing}
  $\mi{exp}_m = \oplus\{\m{app} : \mi{exp} \tensor \mi{exp}, \m{val} : \mi{val}\}$ \\
  $\mi{val}_m = \oplus\{\m{lam} : \mi{val} \lolli \mi{exp}\}$
\end{tabbing}
An interesting property of this representation is that if we pick $m$
to be linear, we obtain the linear $\lambda$-calculus~\cite{Lincoln92lics},
if we pick $m$ to be strict ($\sigma(m) = \{C\}$) we
obtain Church and Rosser's original $\lambda I$
calculus~\cite{Church36}, and if we set $\sigma(m) = \{W,C\}$ we obtain
the ``usual'' $\lambda$-calculus~\cite{Barendregt84book}.  Evaluation (that
is, parallel reduction to a weak head-normal form) is specified by the
following process, no matter which version of the $\lambda$-calculus
we consider.
\begin{tabbing}
  $(e : \mi{exp}) \vdash \mi{eval} : (v : \mi{val})$ \\[1ex]
  $v \leftarrow \mi{eval}\larrow e =$ \\
  \qquad $\mb{case}\; e$ \= $(\, \m{val}(v') \Rightarrow v \leftarrow v'$ \\
  \> $\mid \m{app}\,\langle e_1,e_2\rangle \Rightarrow $ \= $v_1 \leftarrow \mi{eval} \larrow e_1 \semi$ \\
  \>\> $v_2 \leftarrow \mi{eval} \larrow e_2 \semi$ \\
  \>\> $\mb{case}\; v_1$ $(\, \m{lam}(f) \Rightarrow $ \= $e_1' \leftarrow f.\langle v_2,e_1'\rangle \semi$
  \hspace{1em} \% \it $f : \mi{val} \lolli \mi{exp}$ \\
  \>\>\> $v \leftarrow \mi{eval} \larrow e_1'\, )\, )$
\end{tabbing}
In this code, $v_2$ acts like a \emph{future}: we spawn the evaluation
of $e_2$ with the promise to place the result in $v_2$.  In our
dynamics, we allocate a new cell for $v_2$, as yet unfilled.  When we
pass $v_2$ to $f$ in $f.\langle v_2,e_1'\rangle$ the process
$\mi{eval}\; e_2$ may still be computing and we will not block until
we eventually try to read from $v_2$ (which may nor may not happen).

\section{Sequential Semantics}
\label{sec:sem-seq}

While our concurrent semantics is quite expressive and allows for a great
deal of parallelism, in a real-world setting, the overhead of spawning a
new thread can make it inefficient to do so unless the work that thread
does is substantial. Moreover, many of the patterns of concurrent
computation that we would like to model involve adding some limited access
to concurrency in a largely sequential language. We can address both of
these issues with the concurrent semantics by adding a construct to enforce
sequentiality. Here, we will take as our definition of sequentiality that
only one thread (the \emph{active thread}) is able to take a step at a time,
with all other threads being blocked.

The key idea in enforcing sequentiality is to observe that only the cut/spawn
rule turns a single thread into two. When we apply the cut/spawn rule to the
term $x \leftarrow P \semi Q$, $P$ and $Q$ are executed concurrently. One obvious
way (we discuss another later in this section) to enforce
sequentiality is to introduce a \emph{sequential cut} construct
$x \Leftarrow P \semi Q$ that ensures that $P$ runs to completion,
writing its result into $x$, before $Q$ can continue. We do not believe that
we can ensure this using our existing (concurrent) semantics. However, with 
a small addition to the language and semantics, we are able to define a
sequential cut as syntactic sugar for a {\langname} term that does enforce this.

\paragraph{Example Revisited: A Sequential Successor.}
Before we move to the formal definition that enforces sequentiality, we
reconsider the successor example on binary numbers in its most explicit form.
We make all cuts sequential.
\begin{tabbing}
  $\mi{bin}_m = {\oplus}_m\{\m{b0} : \mi{bin}_m, \m{b1} : \mi{bin}_m, \m{e} : \one_m\}$ \\
  $(y : \mi{bin}) \vdash \mi{succ} :: (x : \mi{bin})$ \\
  $x \leftarrow \mi{succ} \larrow y =$ \\
  \qquad $\mb{case}\; y$ \= $(\, \m{b0}(y') \Rightarrow $ \= $x' \Leftarrow (x' \leftarrow y') \semi$ \hspace{3em}
  \= \% \it alloc $x'$ and copy $y'$ to $x'$ \\
  \>\> $x.\m{b1}(x')$ \> \% \it write $\m{b1}(x')$ to $x$ \\
  \> $\mid \m{b1}(y') \Rightarrow $ \= $x' \Leftarrow (x' \leftarrow \mi{succ} \larrow y') \semi$
  \> \% \it alloc $x'$ and spawn $\mi{succ}\; y'$ \\
  \>\> $x.\m{b0}(x')$ \> \% \it write $\m{b0}(x')$ to $x$ \\
  \> $\mid \m{e}(y') \Rightarrow $ \= $x'' \Leftarrow (x'' \leftarrow y')$ \> \% \it alloc $x''$ and copy $y'$ to $x''$ \\
  \>\> $x' \Leftarrow x'.\m{e}(x'')$ \> \% \it alloc $x'$ and write $\m{e}(x'')$ to $x'$ \\
  \>\> $x.\m{b1}(x')\,)$ \> \% \it write $\m{b1}(x')$ to $x$
\end{tabbing}
This now behaves like a typical sequential implementation of a
successor function, but in destination-passing
style~\cite{Wadler84slfp,Larus89thesis,Cervesato02tr,Simmons12phd}.
When there is a carry (manifest as a recursive call to $\mi{succ}$),
the output bit zero will not be written until the effect of the carry
has been fully computed.

To implement sequential cut, we will take advantage of the fact that a shift from
a mode $m$ to itself does not affect provability, but does force synchronization.
If $x : A_m$, we would like to define
\[
x \Leftarrow P \semi Q \triangleq x' \leftarrow P' \semi \mb{case}\; x'\; (\mb{shift}(x) \Rightarrow Q),
\]
where $x' : \down^m_m A_m$, and (informally) $P'$ behaves like $P$, except that
wherever $P$ would write to $x$, $P'$ writes simultaneously to $x$ and $x'$.
Setting aside for now a formal definition of $P'$, we see that $Q$ is blocked until
$x'$ has been written to, and so as long as $P'$ writes to $x'$ no later than it
writes to $x$, this ensures that $x$ is written to before $Q$ can continue.

We now see that in order to define $P'$ from $P$, we need some way to ensure that
$x'$ is written to no later than $x$. The simplest way to do this is to add a form
of \emph{atomic write} which writes to two cells simultaneously. We define three new
constructs for these atomic writes, shown here along with the non-atomic processes
that they imitate. We do not show typing rules here, but each atomic write can be typed
in the same way as its non-atomic equivalent.
\[
\begin{tabular}{l@{\quad}r}
\hline
Atomic Write & Non-atomic equivalent \\
\hline
$x'.\mb{shift}(x.V)$ & 
$x \leftarrow x.V \semi x'.\mb{shift}(x)$ \\
$x'.\mb{shift}(\mb{case}\; x\; K)$ &
$x \leftarrow \mb{case}\; x\; K \semi x'.\mb{shift}(x)$ \\
$x'.\mb{shift}(x \leftarrow y)$ &
$x \leftarrow (x \leftarrow y) \semi x'.\mb{shift}(x)$ \\
\hline
\end{tabular}
\]
Each atomic write simply evaluates in a single step to the configuration where both
$x$ and $x'$ have been written to, much as if the non-atomic equivalent had taken
three steps --- first for the cut, second to write to $x$, and third to write to $x'$.
This intuition is formalized in the following transition rules:
\[\begin{small}
  \begin{array}{ll}
    \m{thread}(x'_m, x'_m.\mb{shift}(x_k.V)) \mapsto \bang_k \m{cell}(x_k, V), \bang_m \m{cell}(x'_m, \mb{shift}(x_k))
    & \mbox{\it atom-val} \\
    \m{thread}(x'_m, x'_m.\mb{shift}(\mb{case}\; x_k\; K)) \mapsto \bang_k \m{cell}(x_k, K), \bang_m \m{cell}(x'_m, \mb{shift}(x_k))
    & \mbox{\it atom-cont} \\
    \m{thread}(x'_m, x'_m.\mb{shift}(x_k \leftarrow y_k)), \bang_k \m{cell}(y_k, W) \mapsto
    \bang_k \m{cell}(x_k, W), \bang_m \m{cell}(x'_m, \mb{shift}(x_k))
    & \mbox{\it atom-id} \\
  \end{array}
  \end{small}
\]
Note that the rule for the identity case is different from the other two ---
it requires the cell $y_k$ to have been written to in order to continue. This
is because the $x \leftarrow y$ construct reads from $y$ and writes to $x$ ---
if we wish to write to $x$ and $x'$ atomically, we must also perform the read
from $y$. 

Now, to obtain $P'$ from $P$, we define a substitution operation
$[x'.\mb{shift}(x) /\!\!/ x]$ that replaces writes to $x$ with atomic writes to $x$ and
$x'$ as follows:
\[
\begin{array}{rcl}
(x.V)[x'.\mb{shift}(x) /\!\!/ x] &=& x'.\mb{shift}(x.V) \\
(\mb{case}\; x\; K)[x'.\mb{shift}(x) /\!\!/ x] &=& x'.\mb{shift}(\mb{case}\; x\; K) \\
(x \leftarrow y)[x'.\mb{shift}(x) /\!\!/ x] &=& x'.\mb{shift}(x \leftarrow y) \\
\end{array}
\]
Extending $[x'.\mb{shift}(x) /\!\!/ x]$ compositionally over our other language constructs,
we can define $P' = P[x'.\mb{shift}(x) /\!\!/ x]$, and so
\[
x \Leftarrow P \semi Q \triangleq x' \leftarrow P[x'.\mb{shift}(x) /\!\!/ x] \semi \mb{case}\; x'\; (\mb{shift}(x) \Rightarrow Q).
\]

We now can use the sequential cut to enforce an order on computation. Of particular
interest is the case where we restrict our language so that all cuts are sequential.
This gives us a fully sequential language, where we indeed have that only one
thread is active at a time.
We will make extensive use of this ability to give a fully sequential language,
and in \cref{sec:fork-join,sec:futures,sec:sill}, we will add back limited
access to concurrency to such a sequential language in order to reconstruct
various patterns of concurrent computation.

There are a few properties of the operation $[x'.\mb{shift}(x) /\!\!/ x]$
and the sequential cut that we will make use of in our embeddings.
Essentially, we would like
to know that $P[x'.\mb{shift}(x) /\!\!/ x]$ has similar behavior
from a typing perspective to $P$, and that a sequential
cut can be typed in a similar manner to a standard concurrent cut.
We make this precise with the following lemmas:
\begin{lemma}
\label{lem:seq-shift-typing}
If $\Gamma \vdash P :: (x : A_m)$, then $\Gamma \vdash P[x'.\mb{shift}(x) /\!\!/ x] :: (x' : \down^m_m A_m)$.
\end{lemma}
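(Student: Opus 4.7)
The plan is induction on the typing derivation of $\Gamma \vdash P :: (x : A_m)$, exploiting the fact that the substitution $[x'.\mb{shift}(x) /\!\!/ x]$ is defined compositionally, descending through constructs that do not themselves write to $x$ and rewriting only the base cases that do.

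For the base cases where $P$ itself writes to $x$ --- namely $x.V$ (typed by one of the positive right rules $\oplus R^0, \tensor R^0, \one R^0, \down R^0$ or the negative-type left rules $\with L^0, \lolli L^0, \up L^0$), $\mb{case}\; x\; K$ (typed by $\with R, \lolli R, \up R$ or one of the positive-type left rules), or the identity $x \leftarrow y$ --- the substitution produces exactly one of the three atomic-write forms introduced in this section. Since each atomic write is typed identically to its non-atomic counterpart except with succedent $\down^m_m A_m$ in place of $A_m$, the derivation for the original $P$ directly yields the required derivation for $P[x'.\mb{shift}(x) /\!\!/ x]$.

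For the compositional cases --- cut $y \leftarrow P_1 \semi P_2$ with $y$ a fresh cell distinct from $x$ (so $P_2$ carries the succedent $x$), or any left-rule case analysis on an antecedent $y \neq x$ (so all continuations $Q_\ell$ carry the succedent $x$) --- the substitution descends exactly into those subterms that write to $x$, leaving the outer structure intact. I apply the inductive hypothesis to each such subterm to obtain a derivation with succedent $x' : \down^m_m A_m$, then reassemble using the same outer typing rule. The context $\Gamma$ is unchanged, and since the mode of the succedent is also unchanged ($\down^m_m A_m$ lives at the same mode $m$ as $A_m$), the side conditions on modes (such as $\Gamma_C, \Delta \geq m$ in the cut rule) continue to hold.

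The only mildly delicate point is a call $x \leftarrow p \leftarrow \overline{w}$ whose destination is $x$: there is no dedicated atomic-call primitive. The natural remedy is to extend the substitution on this case by $(x \leftarrow p \leftarrow \overline{w})[x'.\mb{shift}(x) /\!\!/ x] \triangleq y \leftarrow (y \leftarrow p \leftarrow \overline{w}) \semi x'.\mb{shift}(x \leftarrow y)$ for a fresh $y$, which type-checks by the call rule for the inner process together with the atom-id rule for the outer atomic write, delivering succedent $x' : \down^m_m A_m$. With this convention in place, the induction closes uniformly.
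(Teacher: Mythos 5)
Your proof takes essentially the same route as the paper's: a straightforward induction (the paper phrases it on the structure of $P$, you on its typing derivation, which here comes to the same thing) in which the compositional cases simply apply the inductive hypothesis and reassemble, and each terminal write to $x$ becomes an atomic write typed exactly like its non-atomic expansion $x \leftarrow P \semi x'.\mb{shift}(x)$, i.e.\ by $\m{cut}$ together with $\down R^0$, with the mode side conditions undisturbed because $\down^m_m A_m$ lives at the same mode as $A_m$. Your treatment of the call case $x \leftarrow p \leftarrow \overline{w}$ is a sound supplement to a point the paper leaves implicit under ``extending the substitution compositionally'': since a call is a leaf that provides $x$ but matches none of the three substitution clauses, your eta-expansion through a fresh $y$ followed by the atomic identity write $x'.\mb{shift}(x \leftarrow y)$ type-checks via the $\m{call}$ rule, $\m{cut}$, and the atom-id typing, and cleanly closes that case. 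One small caution: the terminal jumps typed by $\with L^0$, $\lolli L^0$, $\up L^0$, which you fold into the ``$x.V$'' base case, do not literally match the clause $(x.V)[x'.\mb{shift}(x) /\!\!/ x] = x'.\mb{shift}(x.V)$, because there the provided variable $x$ occurs \emph{inside} the value while the subject of the write is a distinct antecedent (e.g.\ $c.\langle w, x\rangle :: (x : B)$); the cut-based typing still goes through identically (and the same eta-expansion fix as for calls works operationally), but strictly these cases need the same explicit convention you supplied for calls. The paper's own proof is no more precise here --- it displays only the $x.V$ case and declares the rest ``similar'' --- so your proposal matches and, on the call case, slightly improves on it.
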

\begin{lemma}
\label{lem:seqcut}
The rule
\begin{srules}
\infer[\m{seqcut}]
  {\Gamma_C, \Delta, \Delta' \vdash (x \Leftarrow P \semi Q) :: (z : C_r)}
  {(\Gamma_C, \Delta \geq m \geq r)
    & \Gamma_C, \Delta \vdash P :: (x : A_m)
    & \Gamma_C, \Delta', x : A_m \vdash Q :: (z : C_r)
  }
\end{srules}
is admissible.
\end{lemma}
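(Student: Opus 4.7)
The plan is to expand the definition
\[
x \Leftarrow P \semi Q \triangleq x' \leftarrow P[x'.\mb{shift}(x) /\!\!/ x] \semi \mb{case}\; x'\; (\mb{shift}(x) \Rightarrow Q)
\]
and build a typing derivation for the right-hand side using only the primitive rules of Figure \ref{fig:typing-rules} together with \Cref{lem:seq-shift-typing}. The key observation is that the intermediate channel $x'$ carries the self-shifted type $\down^m_m A_m$, which lives at the same mode $m$ as $A_m$, so the mode-ordering side condition for an ordinary $\m{cut}$ is inherited verbatim from the $\m{seqcut}$ premise $\Gamma_C, \Delta \geq m \geq r$.

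Concretely, I would apply the concurrent $\m{cut}$ rule to the expansion with cut formula $\down^m_m A_m$, obtaining two subgoals. For the left subgoal, $\Gamma_C, \Delta \vdash P[x'.\mb{shift}(x) /\!\!/ x] :: (x' : \down^m_m A_m)$, I invoke \Cref{lem:seq-shift-typing} directly on the second premise of $\m{seqcut}$. For the right subgoal, $\Gamma_C, \Delta', x' : \down^m_m A_m \vdash \mb{case}\; x'\; (\mb{shift}(x) \Rightarrow Q) :: (z : C_r)$, I apply the linear version ${\down}L_0$, whose single premise $\Gamma_C, \Delta', x : A_m \vdash Q :: (z : C_r)$ is precisely the third premise of $\m{seqcut}$.

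There is essentially no obstacle here; once \Cref{lem:seq-shift-typing} is in hand, the derivation is just an instance of $\m{cut}$ over $\down L_0$. The two small points worth checking are that (i) ${\down}L_0$ is applicable regardless of the structural properties $\sigma(m)$ (the $\alpha = 0$ version always is, so we never need to require contraction at $m$), and (ii) the $\m{cut}$ side condition $\Gamma_C, \Delta \geq m$ transfers because the intermediate type $\down^m_m A_m$ sits at mode $m$, exactly where $A_m$ did. The main subtlety worth flagging in the write-up is simply that one must use the self-shift $\down^m_m$ rather than a shift into a different mode; any other choice would either break the side condition $m \geq r$ or require us to promote to a mode with different structural properties than $A_m$ originally had.
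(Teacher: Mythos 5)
Your proposal is correct and matches the paper's own proof exactly: the paper also derives $\m{seqcut}$ by applying the ordinary $\m{cut}$ rule to the expansion with cut formula $\down^m_m A_m$, discharging the left premise via \Cref{lem:seq-shift-typing} and the right premise via ${\down}L_0$, with the mode condition inherited because the self-shift stays at mode $m$. Your two flagged checkpoints (using the $\alpha=0$ left rule and the necessity of the self-shift) are sound observations consistent with the paper's derivation.
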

\Cref{lem:seq-shift-typing} follows from a simple induction on the
structure of $P$, and \cref{lem:seqcut} can be proven by deriving
the $\m{seqcut}$ rule using \cref{lem:seq-shift-typing}.

In an earlier version of this
paper,\footnote{\cite{Pruiksma20arxiv}, version 1, found at https://arxiv.org/abs/2002.04607}
we developed a separate set of sequential semantics which is bisimilar to the
presentation we give here in terms of sequential cuts. However, by embedding
the sequential cut into the concurrent semantics as syntactic sugar, we are
able to drastically reduce the conceptual and technical overhead needed to
look at interactions between the two different frameworks, and simplify our
encodings of various concurrency patterns.

\paragraph{Example Revisited: $\lambda$-Calculus.}
If we make all cuts in the $\lambda$-calculus interpreter sequential, we obtain
a \emph{call-by-value} semantics.  In particular, it may no longer compute
a weak head-normal form even if it exists.
\begin{tabbing}
  $\mi{exp}_m = \oplus\{\m{app} : \mi{exp} \tensor \mi{exp}, \m{val} : \mi{val}\}$ \\
  $\mi{val}_m = \oplus\{\m{lam} : \mi{val} \lolli \mi{exp}\}$ \\[1ex]
  $(e : \m{exp}) \vdash \mi{eval} : (v : \m{val})$ \\[1ex]
  $v \leftarrow \mi{eval}\larrow e =$ \\
  \qquad $\mb{case}\; e$ \= $(\, \m{val}(v') \Rightarrow v \leftarrow v'$ \\
  \> $\mid \m{app}\,\langle e_1,e_2\rangle \Rightarrow $ \= $v_1 \Leftarrow \mi{eval} \larrow e_1 \semi$ \\
  \>\> $v_2 \Leftarrow \mi{eval} \larrow e_2 \semi$ \\
  \>\> $\mb{case}\; v_1$ $(\, \m{lam}(f) \Rightarrow $ \= $e_1' \Leftarrow f.\langle v_2,e_1'\rangle \semi$ \\
  \>\>\> $v \Leftarrow \mi{eval} \larrow e_1'\, )\, )$
\end{tabbing}

\paragraph{Call-by-name.}
\label{par:call-by-name}

As mentioned at the beginning of this section, there are multiple approaches to
enforcing that only one thread is active at a time. We can think of the sequential cut
defined in \cref{sec:sem-seq} as a form of \emph{call-by-value} --- $P$ is fully
evaluated before $Q$ can continue. Here, we will define a different sequential cut
$x \Leftarrow^N P \semi Q$, which will behave more like \emph{call-by-name}, delaying
execution of $P$ until $Q$ attempts to read from $x$.
Interestingly, this construct avoids the need for atomic write operations!
We nevertheless prefer the ``call-by-value'' form of sequentiality as our default,
as it aligns better with Halstead's approach to futures,
which were defined in a call-by-value language, and also avoids recomputing $P$ if $x$
is used multiple times in $Q$.

As before, we take advantage of shifts for synchronization, here using an upwards shift
rather than a downwards one. If $x : A_m$, we would like to define
\[
x \Leftarrow^N P \semi Q \triangleq x' \leftarrow \mb{case}\; x'\;(\mb{shift}(x) \Rightarrow P) \semi Q',
\]
where $x' : \up_m^m A_m$ and $Q'$ behaves as $Q$, except that where $Q$ would read from $x$, $Q'$
first reads from $x'$ and then from $x$. We can formalize the operation that takes $Q$ to $Q'$ in
a similar manner to $[x'.\mb{shift}(x) /\!\!/ x]$. We will call this operation
$[x'.\mb{shift}(x) \% x]$, so $Q' = Q[x'.\mb{shift}(x) \% x]$.
\[
\begin{array}{rcl}
(x.V)[x'.\mb{shift}(x) \% x] &=& x \leftarrow x'.\mb{shift}(x) \semi x.V \\
(\mb{case}\; x\; K)[x'.\mb{shift}(x) \% x] &=& x \leftarrow x'.\mb{shift}(x) \semi \mb{case}\; x\; K \\
(y \leftarrow x)[x'.\mb{shift}(x) \% x] &=& x \leftarrow x'.\mb{shift}(x) \semi y \leftarrow x \\
\end{array}
\]
Note that unlike in our ``call-by-value'' sequential cut, where we needed to write to
two cells atomically, here, the order of reads is enforced because $x'.\mb{shift}(x)$
will execute the stored continuation $\mb{shift}(x) \Rightarrow P$, which finishes by
writing to $x$. As such, we are guaranteed that $Q'$ is paused waiting to read from
$x$ until $P$ finishes executing. Moreover, $P$ is paused within a continuation until
$Q'$ reads from $x'$, after which it immediately blocks on $x$, so we maintain only one
active thread as desired.

While we will not make much use of this form of sequentiality, we find it interesting
that it is so simply encoded, and that the encoding is so similar to that of call-by-value
cuts. Both constructions are also quite natural --- the main decision that we make is
whether to pause $P$ or $Q$ inside a continuation. From this, the rest of the construction
follows, as there are two natural places to wake up the paused process --- as early as possible
or as late as possible. If we wake the paused process $P$ immediately after the cut, as in
\[
x' \leftarrow \mb{case}\; x'\;(\mb{shift}(x) \Rightarrow P) \semi x \leftarrow x'.\mb{shift}(x) \semi Q,
\]
the result is a concurrent cut with the extra overhead of the shift. Our sequential
cuts are the result of waking the paused process as late as possible --- once there is
no more work to be done in $P$ in the call-by-value cut, and once $Q$ starts to actually
depend on the result of $P$ in the call-by-name cut.

\paragraph{$\lambda$-Calculus Example Revisited.}
We can achieve a sequential interpreter for the $\lambda$-calculus with a single use of a
by-name cut.  This interpreter is then complete: if a weak head-normal form exists, it
will compute it.  We also recall that this property holds no matter which structural
properties we allow for the $\lambda$-calculus (e.g., purely linear if the mode
allows neither weakening nor contraction, of the $\lambda I$-calculus if the mode
only allows contraction).
\begin{tabbing}
  $v \leftarrow \mi{eval}\larrow e =$ \\
  \qquad $\mb{case}\; e$ \= $(\, \mi{val}(v') \Rightarrow v \leftarrow v'$ \\
  \> $\mid \mi{app}\,\langle e_1,e_2\rangle \Rightarrow $ \= $v_1 \Leftarrow \mi{eval} \larrow e_1 \semi$ \\
  \>\> $v_2 \Leftarrow^N \mi{eval} \larrow e_2 \semi$ \\
  \>\> $\mb{case}\; v_1$ $(\, \m{lam}(f) \Rightarrow $ \= $e_1' \Leftarrow f.\langle v_2,e_1'\rangle \semi$ \\
  \>\>\> $v \Leftarrow \mi{eval} \larrow e_1'\, )\, )$
\end{tabbing}

\section{Functions}
\label{sec:functions}

Rather than presenting an embedding or translation of a full (sequential) functional
language into our system, we will focus on the case of functions. There is a standard
translation of natural deduction to sequent calculus taking introduction rules to right
rules, and constructing elimination rules from cut and left rules. We base our embedding
of functions into our language on this translation. By following a similar process with
other types, one can similarly embed other functional constructs, such as products and sums.

We will embed functions into an instance of \langname\ with a single mode $m$.
For this example, we specify $\sigma(m) = \{W, C\}$ in order to model a typical
functional language, but we could, for instance, take $\sigma(m) = \{\}$ to model the
linear $\lambda$-calculus. We also restrict the language at mode $m$ to only have sequential cuts,
which will allow us to better model a sequential language.
Note that while we only specify one mode here, we could work within a larger mode structure, as
long as it contains a suitable mode $m$ at which to implement functions --- namely, one with
the appropriate structural properties, and where we have the restriction of only having sequential
cuts. It is this modularity that allows us to freely combine the various reconstructions presented
here and in the following sections.
As we are only working within a single mode in this section, we will generally omit mode subscripts,
but everything is implicitly at mode $m$.

Now, to add functions to this language, we begin by adding a new type $A \to B$ and two new constructs
--- a constructor and a destructor for this type. The constructor, $z.(\lambda x . P_\star)$, writes a $\lambda$-abstraction to destination $z$. Here, we write $P_\star$ to denote that the process expression $P$ denotes its destination by
$\star$.  We will write $P_y$ for $P[y/\star]$. The use of $\star$ makes this closer to the
functional style, where the location that the result is returned to is not made explicit.
The destructor, $P_\star(Q_\star)$, applies the function $P_\star$ to $Q_\star$. These can be typed using
variants of the standard $\to I$ and $\to E$ rules labeled with channels:

\begin{srules}
\infer[\to I]
  {\Gamma \seq z.(\lambda x . P_\star) :: (z : A \to B)}
  {\Gamma, (x : A) \seq P_\star :: (\star : B)}
\qquad 
\infer[\to E]
  {\Gamma \seq y \leftarrow (P_\star(Q_\star)) :: (y : B)}
  {\Gamma \seq P_\star :: (\star : A \to B)
  &\Gamma \seq Q_\star :: (\star : A)}
\end{srules}

In order to avoid having to augment our language each time we wish
to add a new feature, we will show that these new constructs can
be treated as syntactic sugar for terms already in the language, and,
moreover, that those terms behave as we would expect of functions and
function applications.

We take the following definitions for the new type and terms:
\[
\begin{array}{rcl}
A \to B &\triangleq& A \lolli B \\[1ex]
z.(\lambda x . P_\star) &\triangleq&
\mb{case}\, z\, (\langle x, y \rangle \Rightarrow P_y) \\
y \leftarrow (P_\star(Q_\star)) &\triangleq&
f \Leftarrow P_f ; x \Leftarrow Q_x ; f.\langle x, y \rangle \\
\end{array}
\]
These definitions are type-correct, as shown by the following theorem:
\begin{theorem}
If we expand all new constructs using $\triangleq$,
then the typing rules rules $\to I$ and $\to E$ above are admissible.
\end{theorem}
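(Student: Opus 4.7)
The plan is to handle the two rules separately, in each case unfolding the sugar definition and then matching the resulting process term against the typing rules from Figure 2 (or the admissible $\m{seqcut}$ rule from \cref{lem:seqcut}).

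For $\to I$, the unfolding is direct: $z.(\lambda x.P_\star)$ expands to $\mb{case}\,z\,(\langle x,y\rangle \Rightarrow P_y)$, and since $A \to B = A \lolli B$, the goal $\Gamma \vdash z.(\lambda x.P_\star) :: (z : A \to B)$ becomes exactly the conclusion of a $\lolli R$ instance. The premise of $\lolli R$ requires $\Gamma, x : A \vdash P_y :: (y : B)$, which is obtained from the premise $\Gamma, x : A \vdash P_\star :: (\star : B)$ of $\to I$ by the usual silent renaming of the bound ``result variable'' $\star$ to $y$.

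For $\to E$, I would unfold $y \leftarrow (P_\star(Q_\star))$ into $f \Leftarrow P_f \semi x \Leftarrow Q_x \semi f.\langle x,y\rangle$ and then build the typing derivation outside-in using two applications of $\m{seqcut}$ from \cref{lem:seqcut}, followed by $\lolli L^0$ at the leaf. The outermost $\m{seqcut}$ binds $f : A \lolli B$ from the premise $\Gamma \vdash P_f :: (f : A \lolli B)$ (a renaming of the given $\Gamma \vdash P_\star :: (\star : A \to B)$). The next $\m{seqcut}$ binds $x : A$ from $\Gamma \vdash Q_x :: (x : A)$. The leaf obligation is $\Gamma, f : A \lolli B, x : A \vdash f.\langle x, y\rangle :: (y : B)$, which is exactly $\lolli L^0$ up to the presence of $\Gamma$; since $\sigma(m) = \{W,C\}$, every antecedent in $\Gamma$ is weakenable, so it qualifies as the $\Gamma_W$ of that rule.

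The only real bookkeeping hurdle is justifying the context-sharing. In $\to E$ both premises use the same $\Gamma$, whereas $\m{seqcut}$ (inherited from $\m{cut}$) splits contexts as $\Gamma_C, \Delta, \Delta'$ with $\Delta$ going to the spawned process and $\Delta'$ to the continuation. Because the mode $m$ satisfies $C \in \sigma(m)$, we may take the entire shared $\Gamma$ as $\Gamma_C$, letting both $P_f$ and the continuation use it, and we can likewise absorb $f : A \lolli B$ into $\Gamma_C$ for the inner $\m{seqcut}$. The mode ordering constraint $\Gamma_C, \Delta \geq m \geq r$ of $\m{seqcut}$ is trivial here since everything sits at a single mode $m$. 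This is the step most likely to trip up a careful reader, but it is routine once one observes that $\sigma(m)$ includes $C$. With these pieces in place, the two derivations assemble and $\to I$ and $\to E$ are admissible by construction.
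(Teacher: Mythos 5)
Your proposal is correct and matches the paper's own proof: the appendix derives $\to I$ directly from ${\lolli}R$ and $\to E$ from two applications of the admissible $\m{seqcut}$ rule (\cref{lem:seqcut}) closed off by ${\lolli}L^0$, exactly as you describe. Your explicit justification of the context-sharing and weakening steps via $\sigma(m) = \{W,C\}$ is a point the paper leaves implicit, but it is the same argument.
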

We can prove this by deriving the typing rules, using \cref{lem:seqcut}
in a few places.

Now that we have established that we can expand this syntactic sugar for functions in a
type-correct manner, we examine the evaluation behavior of these terms. First, we consider
the lambda abstraction $z.(\lambda x . P_\star)$ and its expansion
$\mb{case}\, z\, (\langle x, y \rangle \Rightarrow P_y)$. A lambda abstraction should already
be a value, and so we might expect that it can be written to memory immediately. Indeed, in
the expansion, we immediately write the continuation $(\langle x, y \rangle \Rightarrow P_y)$,
which serves as the analogue for $(\lambda x . P_\star)$. This term thus behaves as expected.

We expect that when applying a function $P_\star$ to an argument $Q_\star$, we first reduce
$P_\star$ to a value, then reduce $Q_\star$ to a value, and then apply the value of $P_\star$
to the value of $Q_\star$, generally by substitution. In the term
$f \Leftarrow P_f ; x \Leftarrow Q_x ; f.\langle x, y \rangle$, we see exactly this behavior.
We first evaluate $P_f$ into $f$, then $Q_x$ into $x$, and then apply the continuation stored
in $f$ to the pair $\langle x, y \rangle$. The addition of $y$ allows us to specify
the destination for the result of the function application, as in the
\emph{destination-passing style}~\cite{Wadler84slfp,Larus89thesis,Cervesato02tr,Simmons12phd}
of semantics for functional languages.

\section{Futures}
\label{sec:futures}

Futures~\cite{Halstead85} are a classic example of a primitive
to introduce concurrency into a sequential language. In the usual
presentation, we add to a (sequential) functional language the
ability to create a \emph{future} that immediately returns a
\emph{promise} and spawns a concurrent computation. \emph{Touching}
a promise by trying to access its value blocks until that value has
been computed.  Futures have been a popular mechanism for parallel
execution in both statically and dynamically typed languages, and
they are also used to encapsulate various communication primitives.

The development of a sequential cut in \cref{sec:sem-seq} provides
us with ways to model or reconstruct concurrency primitives, and
futures are a surprisingly simple example of this. Starting
with a language that only allows sequential cuts, we would like to
add a new construct that serves to create a future, as we added
functions to the base language in \cref{sec:functions}. In this case,
however, we already have a construct that behaves exactly as desired.
The concurrent cut $x \leftarrow P \semi Q$ spawns a new process $P$,
and executes $P$ and $Q$ concurrently. When $Q$ tries to read from $x$,
it will block until $P$ has computed a result $W$ and written it to $x$.
If we wish to add an explicit synchronization point, we can do so with
minimal overhead by making use of identity to read from $x$. For instance,
the process $z \Leftarrow (z \leftarrow x) \semi Q$ will first copy or
move the contents of cell $x$ to cell $z$, and then run $Q$. As such, it
delays the execution of $Q$ until $x$ has been written to, even if $Q$ does
not need to look at the value of $x$ until later. This is analogous to
the touch construct of some approaches to futures.

In other words, in this language, futures, rather than being a construct
that we need to add and examine carefully, are in fact the default. This
is, in a sense, opposite to the standard approach, where sequentiality is
the norm and a new construct is needed to handle futures. By instead adding
sequential cut to our otherwise concurrent language, we get the same
expressive power, being able to specify whenever we spawn a new computation
whether it should be run concurrently with or sequentially before the
continuation process.

This approach to futures, much like those in Halstead's Multilisp, does
not distinguish futures at the type level and does not require an explicit
touch construct for synchronization, although we can add synchronization points
as shown. It is possible to provide an encoding
of futures with a distinct type, as they are used in many more modern
languages (see \cref{app:futures-sync}), but we find the form presented
here more natural, as it allows a great deal of flexibility to the programmer ---
a process using a variable $x$ does not know and need not care whether the value of
$x$ is computed concurrently or not.

One interesting result that arises from this approach to futures, and
in particular from the fact that this approach works at any mode $m$,
regardless of what $\sigma(m)$ is, is that by considering the case where
$\sigma(m) = \{\}$, we recover a definition of \emph{linear futures}, which
must be used exactly once. This is limited in that the base language at
mode $m$ will also be linear, along with its futures. However, we are not
restricted to working with one mode. For instance, we may take a mode
$\mS$ with $\sigma(\mS) = \{\}$, which allows for programming linearly with
futures, and a mode $\mS^*$ with $\sigma(\mS^*) = \{\m{W}, \m{C}\}$ and
$\mS < \mS^*$, which allows for standard functional programming.
The shifts between the linear and non-linear modes allow both types of
futures to be used in the same program, embedding the linear language
(including its futures) into the non-linear language via the monad
$\up_\mS^{\mS^*} \down^{\mS^*}_\mS$. Uses for linear futures (without
a full formalization) in the efficient expression of certain parallel
algorithms have already been explored in prior work~\cite{Blelloch99tocs},
but to our knowledge, no formalization of linear futures has yet been given.

\paragraph{Binary Numbers Revisited.}
The program for $\mi{plus2}$ presented in \cref{sec:examples} is a classic example of a
(rather short-lived) pipeline set up with futures.  For this to exhibit the expected
parallelism, the individual $\mi{succ}$ process should also be concurrent in its recursive
call.
\begin{tabbing}
  $(z : \mi{bin}) \vdash \mi{plus2} :: (x : \mi{bin})$ \\[1ex]
  $x \leftarrow \mi{plus2} \larrow z =$ \\
  \qquad $y \leftarrow \mi{succ}\; z \semi$ \\
  \qquad $x \leftarrow \mi{succ}\; y$
\end{tabbing}
Simple variations (for example, setting up a Boolean circuit on bit streams) follow the
same pattern of composition using futures.

\paragraph{\mi{mapReduce} Revisited.}
As a use of futures, consider making all cuts in $\mi{mapReduce}$ sequential except those
representing a recursive call:
\begin{tabbing}
  $(z : \up^s_m B)\, (f : \up^u_m (B \tensor A \tensor B \lolli B))\, (t : \mi{tree}_A)
  \vdash \mi{mapReduce}_{AB} :: (s : B)$ \\[1ex]
  $s \leftarrow \mi{mapReduce}_{AB} \larrow z\; f\; t =$ \\
  \qquad $\mb{case}\; t\; $ \= $(\, \m{empty}\,\langle\,\rangle \Rightarrow $ \= $ z' \Leftarrow z.\mb{shift}(z')$
  \hspace{7em} \= \% \it drop $f$ \\
  \>\> $s \leftarrow z'$ \\
  \> $\mid \m{node}\,\langle l,\langle x,r\rangle\rangle \Rightarrow $ \= $l' \leftarrow \mi{mapReduce}_{AB}\; z\; f\; l \semi$ \\
  \>\> $r' \leftarrow \mi{mapReduce}_{AB}\; z\; f\; r \semi$ \> \% \it duplicate $z$ and $f$ \\
  \>\> $p \Leftarrow p.\langle l',\langle x,r'\rangle\rangle \semi$ \\
  \>\> $s' \Leftarrow f.\mb{shift}\langle p,s'\rangle \semi$ \\
  \>\> $s \leftarrow s'\, )$
\end{tabbing}
In this program, the computation at each node is sequential, but the two recursive calls
to $\mi{mapReduce}$ are spawned as futures.  We synchronize on these futures when
they are needed in the computation of $f$.

\section{Fork/Join Parallelism}
\label{sec:fork-join}

While futures allow us a great deal of freedom in writing concurrent
programs with fine-grained control, sometimes it is useful to have a
more restrictive concurrency primitive, either for implementation reasons
or for reasoning about the program. 
Fork/join parallelism is a simple, yet practically highly successful
paradigm, allowing multiple independent threads to run in parallel,
and then collecting the results together after those threads are finished,
using a \emph{join} construct.
Many slightly different treatments of fork/join exist. Here, 
we will take as the primitive construct a parallel pair
$\langle P_\star \mid Q_\star \rangle$, which runs $P_\star$ and $Q_\star$ in parallel,
and then stores the pair of results. Joining the computation
then occurs when the pair is read from, which requires both $P_\star$ and
$Q_\star$ to have terminated. As with our reconstruction of functions in
\cref{sec:functions}, we will use a single mode $m$ which may have
arbitrary structural properties, but only allows sequential cuts.
As we are working with only a single mode, we will generally omit
the subscripts that indicate mode, writing $A$ rather than $A_m$.

We introduce a new type $A_m \Vert B_m$
of parallel pairs and new terms to create and read from such pairs. We
present these terms in the following typing rules:
\begin{srules}
\infer[\Vert R]
  {\Gamma_C, \Gamma, \Delta \vdash z.\langle P_\star \mid Q_\star \rangle :: (z : A \Vert B)}
  {\Gamma_C, \Gamma \vdash P_\star :: (\star : A)
  &\Gamma_C, \Delta \vdash Q_\star :: (\star : B)}
\\[1em]
\infer[\Vert L]
  {\Gamma, (x : A \Vert B) \vdash \mb{case}\; x\; (\langle z \mid w \rangle \Rightarrow R) :: (c : C)}
  {\Gamma, (z : A), (w : B) \vdash R :: (c : C)}
\end{srules}

As in \cref{sec:functions} we can reconstruct
these types and terms in \langname\ already. Here, we define:
\[
\begin{array}{rcl}
A \Vert B &\triangleq&
A \otimes B \\[1ex]
z.\langle P_\star \mid Q_\star \rangle &\triangleq&
x' \leftarrow P_\star[x'.\mb{shift}(x) /\!\!/ \star] \semi \\
&& y' \leftarrow Q_\star[y'.\mb{shift}(y) /\!\!/ \star] \semi \\
&& \mb{case}\; x'\; (\mb{shift}(x) \Rightarrow \mb{case}\; y'\; (\mb{shift}(y) \Rightarrow z.\langle x, y \rangle )) \\
\mb{case}\; x\; (\langle z \mid w \rangle \Rightarrow R) &\triangleq&
\mb{case}\; x\; (\langle z, w \rangle \Rightarrow R)\\
\end{array}
\]

This definition respects the typing as prescribed by the $\Vert R$
and $\Vert L$ rules.
\begin{theorem}
If we expand all new constructs using $\triangleq$,
then the $\Vert R$ and $\Vert L$ rules above are admissible.
\end{theorem}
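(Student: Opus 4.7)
The plan is to derive each of the two rules $\Vert R$ and $\Vert L$ by unfolding $\triangleq$ and then building a typing derivation using only the primitive rules of \cref{fig:typing-rules} together with \cref{lem:seq-shift-typing,lem:seqcut}.

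The $\Vert L$ case is essentially immediate. After unfolding, we need to derive
\[
\Gamma, (x : A \tensor B) \vdash \mb{case}\; x\; (\langle z, w \rangle \Rightarrow R) :: (c : C)
\]
from $\Gamma, (z : A), (w : B) \vdash R :: (c : C)$, which is just the ${\tensor}L_0$ rule. No further work is required.

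For $\Vert R$ I would proceed by composing two applications of \cref{lem:seq-shift-typing} with a few cuts and two ${\down}L_0$ rules, culminating in ${\tensor}R^0$. First, from the premise $\Gamma_C, \Gamma \vdash P_\star :: (\star : A)$ \cref{lem:seq-shift-typing} gives us $\Gamma_C, \Gamma \vdash P_\star[x'.\mb{shift}(x)/\!\!/\star] :: (x' : \down^m_m A)$, and similarly from $\Gamma_C, \Delta \vdash Q_\star :: (\star : B)$ we obtain $\Gamma_C, \Delta \vdash Q_\star[y'.\mb{shift}(y)/\!\!/\star] :: (y' : \down^m_m B)$. I then apply the $\m{cut}$ rule twice: the first cut splits the context as $\Gamma_C \mid \Gamma \mid \Delta, z{:}A\Vert B$, spawning $P_\star[x'.\mb{shift}(x)/\!\!/\star]$ with destination $x'$; the second cut splits the remaining context as $\Gamma_C \mid \Delta \mid z{:}A\Vert B, x'{:}\down^m_m A$, spawning $Q_\star[y'.\mb{shift}(y)/\!\!/\star]$ with destination $y'$. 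After both cuts the continuation lives in the context $\Gamma_C, x' : \down^m_m A, y' : \down^m_m B$ and still needs to provide $z : A\Vert B$.

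In that continuation I apply ${\down}L_0$ on $x'$ to bind $x : A$, then ${\down}L_0$ on $y'$ to bind $y : B$, and finally ${\tensor}R^0$ to write $z.\langle x, y\rangle$ (recalling $A\Vert B \triangleq A \tensor B$). Since the intermediate destinations $x'$ and $y'$ live at mode $m$ (shifted from $m$ to $m$), the side condition $\Gamma_C, \Delta \geq m \geq r$ on the cuts reduces to the same mode constraints already present in the $\Vert R$ premises, so the mode ordering check is automatic.

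The main obstacle, such as it is, is bookkeeping the context splits correctly so that $\Gamma_C$ is shared across both cut premises (using contraction), while $\Gamma$ and $\Delta$ end up exclusively with the appropriate spawned subprocess. Because $\Gamma_C$ is precisely the part of the context satisfying $C \in \sigma(m)$, the duplication permitted by the $\m{cut}$ rule on $\Gamma_C$ handles this cleanly. No confluence or semantic reasoning is needed here; the proof is entirely a typing derivation, with \cref{lem:seq-shift-typing} doing all the real work of justifying the atomic-write rewriting used in the expansion.
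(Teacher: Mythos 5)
Your overall route is the same as the paper's: $\Vert L$ unfolds to a single ${\tensor}L_0$ inference, and $\Vert R$ is derived from two applications of \cref{lem:seq-shift-typing}, two (concurrent) cuts, two ${\down}L_0$ steps, and a final ${\tensor}R^0$. But there is one concrete step that fails as written. You share $\Gamma_C$ across both cuts, so your innermost continuation carries $\Gamma_C, x' : \down^m_m A, y' : \down^m_m B$; after the two ${\down}L_0$ steps you must then derive $\Gamma_C, x : A, y : B \vdash z.\langle x, y\rangle :: (z : A \tensor B)$. The axiom ${\tensor}R^0$ only tolerates a \emph{weakenable} leftover context $\Gamma_W$, and $C \in \sigma(m)$ does not imply $W \in \sigma(m)$ in this system --- the strict mode $s$ with $\sigma(s) = \{C\}$ used in the paper's $\mi{mapReduce}$ example is exactly a mode where your final step is not an instance of any rule. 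So the claim that ``the duplication permitted by the $\m{cut}$ rule on $\Gamma_C$ handles this cleanly'' is precisely where the argument breaks: duplicating $\Gamma_C$ into both premises of the second cut strands a copy in the continuation that you then have no way to discard.

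The repair is a different context split in the second cut, and it is what the paper's appendix derivation does. The division into $\Gamma_C, \Delta, \Delta'$ in the $\m{cut}$ rule is a choice, so place the contractable variables entirely into the portion given to $Q_\star[y'.\mb{shift}(y)/\!\!/\star]$ --- its premise $\Gamma_C, \Delta \vdash Q_\star[y'.\mb{shift}(y)/\!\!/\star] :: (y' : \down^m_m B)$, obtained from \cref{lem:seq-shift-typing}, consumes them anyway --- rather than sharing them with the continuation. The continuation of the second cut is then exactly $x' : \down^m_m A, y' : \down^m_m B \vdash \mb{case}\; x'\;(\ldots) :: (z : A \tensor B)$, and the two ${\down}L_0$ steps followed by ${\tensor}R^0$ close the derivation with an empty leftover context, no weakening required. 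With that one bookkeeping correction your proof coincides with the paper's; everything else, including the observation that the mode side conditions are automatic at a single mode and that $\Vert L$ needs no work beyond ${\tensor}L_0$, matches.
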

This theorem follows quite straightforwardly from \cref{lem:seq-shift-typing}.

The evaluation behavior of these parallel pairs is quite simple --- we first
observe that, as the derivation of $\Vert L$ in the theorem above suggests,
the reader of a parallel pair behaves exactly as the reader of an ordinary
pair. The only difference, then, is in the synchronization behavior of the
writer of the pair. Examining the term
\[
\begin{array}{l}
x' \leftarrow P_\star[x'.\mb{shift}(x) /\!\!/ \star] \semi \\
y' \leftarrow Q_\star[y'.\mb{shift}(y) /\!\!/ \star] \semi \\
\mb{case}\; x'\; (\mb{shift}(x) \Rightarrow \mb{case}\; y'\; (\mb{shift}(y) \Rightarrow z.\langle x, y \rangle )) \\
\end{array}
\]
we see that it spawns two new threads, which run concurrently with
the original thread.
The new threads execute $P_\star[x'.\mb{shift}(x) /\!\!/ \star]$ and
$Q_\star[y'.\mb{shift}(y) /\!\!/ \star]$ with destinations $x'$ and $y'$,
respectively, while the original thread waits first on $x'$, then
on $y'$, before writing the pair $\langle x, y \rangle$ to $z$.
Because the new threads will write to $x$ and $x'$ atomically, and
similarly for $y$ and $y'$, by the time $\langle x, y \rangle$ is
written to $z$, $x$ and $y$ must have already been written to.
However, because both cuts in this term are concurrent cuts, $P_\star$
and $Q_\star$ run concurrently, as we expect from a parallel pair.

\paragraph{$\mi{mapReduce}$ Revisited.}
We can use the fork/join pattern in the implementation of $\mi{mapReduce}$
so that we first synchronize on the results returned from the two
recursive calls before we call $f$ on them.
\begin{tabbing}
  $(z : \up^s_m B)\, (f : \up^u_m (B \tensor A \tensor B \lolli B))\, (t : \mi{tree}_A)
  \vdash \mi{mapReduce}_{AB} :: (s : B)$ \\[1ex]
  $s \leftarrow \mi{mapReduce}_{AB} \larrow z\; f\; t =$ \\
  \qquad $\mb{case}\; t\; $ \= $(\, \m{empty}\,\langle\,\rangle \Rightarrow $ \= $ z' \Leftarrow z.\mb{shift}(z')$ \\
  \>\> $s \leftarrow z'$ \\
  \> $\mid \m{node}\,\langle l,\langle x,r\rangle\rangle \Rightarrow $ \\
  \> \qquad \= $rl \leftarrow rl.\langle \mi{mapReduce}_{AB}\; x\; f\; l \mid \mi{mapReduce}_{AB}\; x\; f\; r\rangle \semi$ \\
  \>\> $\mb{case}\; rl\; (\langle l' \mid r' \rangle \Rightarrow$
  \= $p \Leftarrow p.\langle l',\langle x,r'\rangle\rangle \semi$ \\
  \>\>\> $s' \Leftarrow f.\mb{shift}\langle p,s'\rangle \semi$ \\
  \>\>\> $s \leftarrow s'\,) \,)$
\end{tabbing}

\section{Monadic Concurrency}
\label{sec:sill}

For a different type of concurrency primitive, we look at a
monad for concurrency, taking some inspiration from
SILL~\cite{Toninho13esop,Toninho15phd,Griffith16phd}, which
makes use of a contextual monad to embed the concurrency
primitives of linear session types into a functional language. This
allows us to have both a fully-featured sequential functional
language and a fully-featured concurrent linear language, with the
concurrent layer able to refer on variables in the sequential
layer, but not the other way around.

To construct this concurrency monad, we will use two modes $\mN$ and $\mS$
with $\mN < \mS$. Intuitively, the linear concurrent portion of the
language is at mode $\mN$, while the functional portion is at mode
$\mS$. As in common in functional languages, $\mS$ allows weakening
and contraction ($\sigma(\mS) = \{W,C\}$), but only permits sequential
cuts (by which we mean that any cut whose principal formula is at mode
$\mS$ must be a sequential cut) so that it models a sequential
functional language. By contrast, $\mN$ allows concurrent cuts, but is
linear ($\sigma(\mS) = \{\}$). We will write $A_\mS$ and $A_\mN$ for
sequential and concurrent types, respectively.

We will borrow notation from SILL, using the type $\{ A_\mN \}$
for the monad, and types $A_\mS \land B_\mN$ and $A_\mS \supset B_\mN$
to send and receive functional values in the concurrent layer, respectively.
The type $\{ A_\mN \}$ has as values process expressions
$\{ P_\star\}$ such that ${P_\star :: (\star : A_\mN)}$.
These process expressions can be constructed and passed around in
the functional layer. In order to actually execute these processes,
however, we need to use a \emph{bind} construct
$\{c_\mN\} \leftarrow Q_\star$ in the functional layer, which will
evaluate $Q_\star$ into an encapsulated process expression $\{P_\star\}$
and then run $P_\star$, storing its result in $c_\mN$.
We can add $\{\cdot\}$ to our language with the typing rules below.
Here, $\Gamma_\mS$ indicates that all assumptions in $\Gamma$ are at mode $\mS$:
\begin{srules}
\infer[\{\cdot\}I]
  {\Gamma_\mS \vdash y_\mS.\{P_\star\} :: (y_\mS :: \{A_\mN\})}
  {\Gamma_\mS \vdash P_\star :: (\star_\mN : A_\mN)}
\qquad
\qquad
\infer[\{\cdot\}E]
  {\Gamma_\mS \vdash \{c_\mN\} \leftarrow Q_\star :: (c_\mN : A_\mN)}
  {\Gamma_\mS \vdash Q_\star :: (\star_\mS :: \{A_\mN\})}
\end{srules}

Since they live in the session-typed layer, the $\land$ and $\supset$
constructs fit more straightforwardly into our language. We will focus on the type
$A_\mS \land B_\mN$, but $A_\mS \supset B_\mN$ can be handled similarly. A process of type
$A_\mS \land B_\mN$ should write a pair of a functional value with type $A_\mS$ and a
concurrent value with type $B_\mN$. These terms and their typing rules are shown below:
\begin{srules}
\infer[\land R^0]
  {\Gamma_W, (v_\mS : A_\mS), (y_\mN : B_\mN) \vdash
   x_\mN.\langle v_\mS, y_\mN \rangle :: (x_\mN :: A_\mS \land B_\mN)
  }
  {\mathstrut}
\end{srules}
\begin{srules}
\infer[\land L]
  {\Gamma, (x_\mN : A_\mS \land B_\mN) \vdash
   \mb{case}\; x_\mN\; (\langle v_\mS, y_\mN \rangle \Rightarrow P_z) :: (z_\mN : C_\mN)
  }
  {\Gamma, (v_\mS : A_\mS), (y_\mN : A_\mN) \vdash P_z :: (z_\mN : C_\mN)
  }
\end{srules}

To bring these new constructs into the base language, we define
\[
\begin{array}{rcl}
\{A_\mN\} &\triangleq& \up_\mN^\mS A_\mN \\
A_\mS \land B_\mN &\triangleq& \left(\down^\mS_\mN A_\mS\right) \otimes B_\mN \\[1ex]
d_\mS.\{P_\star\} &\triangleq&
\mb{case}\; d_\mS\; (\mb{shift}(x_\mN) \Rightarrow P_x) \\
\{c_\mN\} \leftarrow Q_\star &\triangleq&
y_\mS \Leftarrow Q_y ; y_\mS.\mb{shift}(c_\mN) \\[1ex]
d_\mN.\langle v_\mS, y_\mN \rangle &\triangleq&
x_\mN \leftarrow x_\mN.\mb{shift}(v_\mS) ; d_\mN.\langle x_\mN, y_\mN \rangle \\
\mb{case}\; d_\mN\; (\langle u_\mS, w_\mN \rangle \Rightarrow P_z) &\triangleq&
\mb{case}\; d_\mN\; (\langle x_\mN, w_\mN \rangle \Rightarrow \mb{case}\; x_\mN (\mb{shift}(v_\mS) \Rightarrow P_z)) \\
\end{array}
\]

These definitions give us the usual type-correctness theorem:
\begin{theorem}
If we expand all new constructs using $\triangleq$,
then the typing rules for $\{\cdot\}$ and $\land$ are admissible.
\end{theorem}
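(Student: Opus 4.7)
The plan is to prove admissibility by exhibiting, for each of the four new rules ($\{\cdot\}I$, $\{\cdot\}E$, $\land R^0$, $\land L$), a derivation in \langname\ (extended with sequential cuts, which are themselves sugar and thus admissible via \cref{lem:seqcut}) that concludes exactly the intended sequent after $\triangleq$-expansion. Since each new construct expands into a fixed small tree of primitive constructs, this reduces to four independent and essentially syntactic derivations, each matching the shape of the expansion. Throughout, the mode constraints will be satisfied by our standing assumption $\mN < \mS$: $\up_\mN^\mS A_\mN$ requires $\mS \geq \mN$, and $\down^\mS_\mN A_\mS$ requires $\mS \geq \mN$, both of which hold.

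For $\{\cdot\}I$, the expansion $\mb{case}\; y_\mS\; (\mb{shift}(x_\mN) \Rightarrow P_x)$ is exactly the conclusion of the $\up R$ rule applied to the premise $\Gamma_\mS \vdash P_x :: (x_\mN : A_\mN)$, and $\{A_\mN\} \triangleq \up_\mN^\mS A_\mN$, so one step of $\up R$ suffices. For $\{\cdot\}E$, the expansion $y_\mS \Leftarrow Q_y \semi y_\mS.\mb{shift}(c_\mN)$ is built from a sequential cut with $Q_y$ on the left (typed by hypothesis at $(y_\mS : \{A_\mN\})$, i.e. at $\up_\mN^\mS A_\mN$) and $y_\mS.\mb{shift}(c_\mN)$ on the right, which is a direct instance of $\up L^0$ producing $(c_\mN : A_\mN)$. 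Admissibility of the sequential cut itself comes from \cref{lem:seqcut}; the mode condition $\mS \geq \mN$ needed by $\m{seqcut}$ is our standing hypothesis.

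For $\land R^0$, the expansion $x_\mN \leftarrow x_\mN.\mb{shift}(v_\mS) \semi d_\mN.\langle x_\mN, y_\mN \rangle$ is a (concurrent) cut whose left premise uses $\down R^0$ to write $\mb{shift}(v_\mS)$ into the fresh cell $x_\mN : \down^\mS_\mN A_\mS$, and whose right premise uses $\tensor R^0$ to write the pair $\langle x_\mN, y_\mN \rangle$ into $d_\mN$ of type $(\down^\mS_\mN A_\mS) \tensor B_\mN = A_\mS \land B_\mN$. Since $(v_\mS : A_\mS)$ is used only in the left branch and $(y_\mN : B_\mN)$ only in the right branch, the context splitting imposed by $\cut$ is respected, and the ambient $\Gamma_W$ is absorbed by the weakening implicit in the two $R^0$ rules. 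For $\land L$, the expansion is simply $\tensor L$ applied first to decompose $(d_\mN : (\down^\mS_\mN A_\mS) \tensor B_\mN)$ into $(x_\mN : \down^\mS_\mN A_\mS)$ and $(w_\mN : B_\mN)$, followed by $\down L$ on $x_\mN$ to obtain $(v_\mS : A_\mS)$; the continuation $P_z$ is typed in the resulting context by the assumed premise.

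The main obstacle, and really the only place where anything nontrivial happens, is in the $\{\cdot\}E$ case, because it is the one that relies on \cref{lem:seqcut} rather than on primitive rules. In particular, I need the sequential cut's mode ordering condition to be met and I need the intermediate variable $y_\mS$ at mode $\mS$ to be usable by $y_\mS.\mb{shift}(c_\mN)$, which in turn writes at mode $\mN$; this is exactly what $\up L^0$ accommodates with its annotation $\mS \geq \mN$, and the surrounding $\Gamma_\mS$ satisfies $\Gamma_\mS \geq \mN$ by transitivity. All other cases are immediate single- or two-rule derivations, so the overall argument is short, but it depends on the modal side-conditions lining up exactly as designed.
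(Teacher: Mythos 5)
Your proof is correct and follows essentially the same route as the paper's: a single application of $\up R$ for $\{\cdot\}I$, the admissible $\m{seqcut}$ rule from \cref{lem:seqcut} composed with $\up L^0$ for $\{\cdot\}E$, a cut of $\down R^0$ against $\tensor R^0$ for $\land R^0$, and $\tensor L$ followed by $\down L$ for $\land L$, with the same modal side-conditions discharged by $\mN < \mS$. The only (immaterial) difference is that the paper additionally remarks that the $\land$ rules are not merely admissible but derivable, since they avoid \cref{lem:seqcut} entirely.
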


As with the previous sections, it is not enough to know that these definitions
are well-typed --- we would also like to verify that they have the behavior
we expect for a concurrency monad. In both cases,
this is relatively straightforward. Examining the term
\[
d_\mS.\{P_\star\} \quad \triangleq \quad
\mb{case}\; d_\mS\; (\mb{shift}(x_\mN) \Rightarrow P_x),
\]
we see that this writes a continuation into memory, containing the
process $P_x$. A reference to this continuation can then be passed around
freely, until it is executed using the bind construct:
\[
\{c_\mN\} \leftarrow P_\star \quad \triangleq \quad
y_\mS \Leftarrow P_y ; y_\mS.\mb{shift}(c_\mN) 
\]
This construct first evaluates $P_y$ with destination $y_\mS$, to get a stored
process, and then executes that stored process with destination $c_\mN$.

The $\land$ construct is even simpler. Writing a functional value using the term
\[
d_\mN.\langle v_\mS, y_\mN \rangle
\quad \triangleq \quad
x_\mN \leftarrow x_\mN.\mb{shift}(v_\mS) ; d_\mN.\langle x_\mN, y_\mN \rangle
\]
sends both a shift (bringing the functional value into the concurrent
layer) and the pair $\langle x_\mN, y_\mN \rangle$ of the continuation $y_\mN$
and the shift-encapsulated value $x_\mN$. Reading such a value using the term
\[
\mb{case}\; d_\mN\; (\langle v_\mS, y_\mN \rangle \Rightarrow P_z) 
\quad \triangleq \quad
\mb{case}\; d_\mN\; (\langle x_\mN, y_\mN \rangle \Rightarrow \mb{case}\; x_\mN (\mb{shift}(v_\mS) \Rightarrow P_z))
\]
just does the opposite --- we read the pair out of memory, peel the shift off of
the functional value $v_\mS$ to return it to the sequential, functional layer, and
continue with the process $P_z$, which may make use of both $v_\mS$ and the continuation
$y_\mN$.

These terms therefore capture the general behavior of a monad used to encapsulate concurrency
inside a functional language. The details of the monad we present here are different from
that of SILL's (contextual) monad, despite our use of similar notation, but the essential idea
is the same.

\paragraph{Example: A Concurrent Counter.}
We continue our example of binary numbers, this time supposing that the mode $m = \mS$,
that is, our numbers and the successor function on them are sequential and allow weakening
and contraction.  $\mi{counter}$ represents a concurrently running process that can
receive $\m{inc}$ and $\m{val}$ messages to increment or retrieve the counter value,
respectively.
\begin{tabbing}
$\mi{ctr}_\mN = \with_\mN \{\m{inc} : \mi{ctr}_\mN, \m{val} : \mi{bin}_\mS \land \mi{ctr}_\mN\}$ \\[1ex]
$(x : \mi{bin}_\mS) \vdash \mi{counter} :: (c : \mi{ctr}_\mN)$ \\[1ex]
$c \leftarrow \mi{counter} \larrow x =$ \\
\qquad $\mb{case}\; c\; $ \= $(\, \m{inc}(c') \Rightarrow$ \= $ x' \Leftarrow \mi{succ}\; x \semi$ \\
\>\> $c' \leftarrow \mi{counter}\; x'$ \\
\> $|\, \m{val}(c') \Rightarrow$ \= $c'' \leftarrow \mi{counter}\; x \semi$ \\
\>\> $c'.\langle x, c'' \rangle$
\end{tabbing}

\section{Conclusion}


We have presented a concurrent shared-memory semantics based on a
semi-axiomatic~\cite{DeYoung20fscd} presentation of adjoint
logic~\cite{Reed09un,Licata16lfcs,Licata17fscd,Pruiksma19places}, for which
we have usual variants of progress and preservation, as well as confluence.
We then demonstrate that by adding a limited form of atomic writes, we can
model \emph{sequential} computation. Taking advantage of this, we reconstruct
several patterns that provide limited access to concurrency in
a sequential language, such as fork/join, futures, and monadic concurrency in
the style of SILL. The uniform nature of these reconstructions means that they
are all mutually compatible, and so we can freely work with any set of these
concurrency primitives within the same language.

There are several potential directions that future work in this space could take.
In our reconstruction of futures, we incidentally also provide a definition of
\emph{linear} futures, which have been used in designing pipelines~\cite{Blelloch99tocs},
but to our knowledge have not been examined formally or implemented. One item of
future work, then, would be to further explore linear futures, now aided by a
formal definition which is also amenable to implementation. We also believe that
it would be interesting to explore an implementation of our language as a whole,
and to investigate what other concurrency patterns arise naturally when working
in it.
Additionally, the stratification of the language into layers connected with adjoint
operators strongly suggests that some properties of a language instance as a whole can
be obtained modularly from properties of the sublanguages at each mode. Although based on
different primitives, research on monads and comonads to capture effects and coeffects,
respectively,~\cite{Curien16popl,Gaboardi16icfp} also points in this direction. In
particular, we would like to explore a modular theory of (observational) equivalence
using this approach. Some work on observational equivalence in a substructural setting
already exists,~\cite{Kavanagh20express} but works in a message-passing setting and does not
seem to translate directly to the shared-memory setting of \langname.

\bibliographystyle{splncs04}
\bibliography{fp}

\appendix
\clearpage

\section{Typed Futures}
\label{app:futures-sync}

The futures that we discuss in \cref{sec:futures} behave much like
Halstead's original futures in Multilisp, which, rather than being
distinguished at the type level, are purely operational. One side
effect of this is that while we can explicitly synchronize these
futures, we can also make use of implicit synchronization, where
accessing the value of the future blocks until it has been computed,
without the need for a touch construct.

Here, we will look at a different encoding of futures, which distinguishes
futures at the type level, as they have often been presented since.
As in \cref{sec:functions}, we will work with a single mode $m$,
in which we will only allow sequential cuts, and which may have any
set $\sigma(m)$ of structural properties. To the base language, we
add the following new types and process terms for futures:
\[
  \begin{array}{llcl}
    \mbox{Types} & A & ::= & \ldots \mid \mb{fut}\, A \\
    \mbox{Processes} & P & ::= & \ldots \mid x.\langle P_\star \rangle \mid \mb{touch}\, y\, (\langle z\rangle \Rightarrow P)
  \end{array}
\]
We type these new constructs as:
\begin{rules}
  \infer[{\mb{fut}}R]
  {\Gamma \vdash x_m.\langle P_\star \rangle :: (x_m : \mb{fut}\; A_m)}
  {\Gamma \vdash P_\star :: (\star : A_m)}
  \\[1em]
  \infer[{\mb{fut}}L]
  {\Gamma, x_m : \mb{fut}\, A_m \vdash \mb{touch}\, x_m\, (\langle z_m\rangle \Rightarrow Q) :: (w_m : C_m)}
  {\Gamma, z_m : A_m \vdash Q :: (w_m : C_m)}
\end{rules}%

We then reconstruct this in {\langname} by defining
\[
  \begin{array}{lcl}
    \mb{fut}\, A_m & \triangleq & \down^m_m \down^m_m A_m \\[1ex]
    x_m.\langle P_\star \rangle
    & \triangleq &
    z_m' \leftarrow P_\star[z_m'.\mb{shift}(z_m) /\!\!/ \star] \semi x_m.\mb{shift}(z_m') \\
    \mb{touch}\, x_m\, (\langle z_m\rangle \Rightarrow Q)
    & \triangleq &
    \mb{case}\, x_m\, (\mb{shift}(z_m') \Rightarrow \mb{case}\, z_m' (\mb{shift}(z_m) \Rightarrow Q)) \\
  \end{array}
\]
This is not the only possible reconstruction,
\footnote{
In particular, as the role of the outer shift is simply to allow the
client of the future to proceed, we can replace the shift with any other
type that forces a send but does not provide any useful information.
Examples include $\up_m^m \down^m_m A_m$ and $\one_m \otimes (\down^m_m A_m)$.
}
but we use it because it is the simplest one that we have found.
The first property to verify is that these definitions are type-correct:
\begin{theorem}
If we expand all new constructs using $\triangleq$,
then the rules ${\mb{fut} L}$ and ${\mb{fut} R}$ are admissible.
\end{theorem}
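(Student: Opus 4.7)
The plan is to derive both typing rules from their expansions using the typing rules of \langname\ together with \cref{lem:seq-shift-typing}. Both derivations are direct, and the structure of each expansion essentially dictates which rules to apply, so I expect no real obstacles beyond verifying that the mode constraints on each rule are satisfied (which is automatic here since only the single mode $m$ is involved and $m \geq m$ trivially).

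For $\mb{fut}R$, the expansion is a cut followed by a shift write. Starting from the premise $\Gamma \vdash P_\star :: (\star : A_m)$, I would first apply \cref{lem:seq-shift-typing} to obtain
\[
\Gamma \vdash P_\star[z_m'.\mb{shift}(z_m) /\!\!/ \star] :: (z_m' : \down^m_m A_m).
\]
Next I would apply ${\down}R^0$ to derive $z_m' : \down^m_m A_m \vdash x_m.\mb{shift}(z_m') :: (x_m : \down^m_m \down^m_m A_m)$, which matches the target type since $\mb{fut}\, A_m \triangleq \down^m_m \down^m_m A_m$. The $\m{cut}$ rule then combines these two derivations, with the mode condition $m \geq m$ satisfied by reflexivity, yielding exactly the conclusion of $\mb{fut}R$.

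For $\mb{fut}L$, the expansion is two nested pattern matches against $\mb{shift}$, so the derivation proceeds by two applications of ${\down}L$ (in the $\alpha = 0$ form, or $\alpha = 1$ if $C \in \sigma(m)$; either works). Starting from the premise $\Gamma, z_m : A_m \vdash Q :: (w_m : C_m)$, the inner $\mb{case}\, z_m'\, (\mb{shift}(z_m) \Rightarrow Q)$ is typed by ${\down}L$ using the hypothesis $z_m' : \down^m_m A_m$; then the outer $\mb{case}\, x_m\, (\mb{shift}(z_m') \Rightarrow \cdots)$ is typed by another ${\down}L$ using $x_m : \down^m_m \down^m_m A_m = \mb{fut}\, A_m$. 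The resulting derivation has exactly the hypotheses and conclusion required by $\mb{fut}L$.

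In both cases the admissibility follows purely from assembling existing typing rules and applying \cref{lem:seq-shift-typing} once; no new lemmas are needed. The only point worth being careful about is the treatment of the context of $P_\star$ when instantiating the cut in $\mb{fut}R$ — the split of $\Gamma$ between the two premises of cut is trivial here because $P_\star$ uses all of $\Gamma$ and the continuation uses only $z_m'$, so the partition is forced.
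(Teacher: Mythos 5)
Your proposal is correct and follows essentially the same route as the paper's own proof: for $\mb{fut}R$, one application of \cref{lem:seq-shift-typing} on the first premise, ${\down}R^0$ for $x_m.\mb{shift}(z_m')$, and a $\m{cut}$ combining them (with mode conditions trivial at the single mode $m$); for $\mb{fut}L$, two nested applications of ${\down}L_0$. Your closing remark about the forced context split in the cut is also consistent with the paper's derivation, where the second premise carries only $z_m' : \down^m_m A_m$.
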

\begin{proof}
By examining typing derivations for these processes, we see that
these rules can be derived as follows:
\begin{small}
\[
\infer[\cut]
  {\Gamma \vdash z_m' \leftarrow P_\star[z_m'.\mb{shift}(z_m) /\!\!/ \star] \semi
                 x_m.\mb{shift}(z_m') :: (x_m : \down^m_m \down^m_m A_m)
  }
  {\infer-[\cref{lem:seq-shift-typing}]
    {\Gamma \vdash P_\star[z_m'.\mb{shift}(z_m) /\!\!/ \star] :: (z_m' : \down^m_m A_m)}
    {\Gamma \vdash P_\star :: (\star :: A_m)}
  &\infer[\down R^0]
    {z_m' : \down^m_m A_m \vdash x_m.\mb{shift}(z_m') :: (x_m : \down^m_m \down^m_m A_m)}
    {\mathstrut}
  }
\]
\vspace{1.5em}
\[
\infer[\down L_0]
  {\Gamma, x_m : \down^m_m \down^m_m A_m \vdash
   \mb{case}\, x_m\, (\mb{shift}(z_m') \Rightarrow
   \mb{case}\, z_m' (\mb{shift}(z_m) \Rightarrow Q)) :: (w : C_m)}
  {\infer[\down L_0]
     {\Gamma, z_m' : \down^m_m A_m \vdash
      \mb{case}\, z_m' (\mb{shift}(z_m) \Rightarrow Q) :: (w : C_m)}
     {\Gamma, z_m : A_m \vdash Q :: (w : C_m)}
  }
\]
\end{small}
Note that we omit mode conditions on cut because within a single mode $m$,
they are necessarily satisfied.
\end{proof}

Now, we examine the computational behavior of these terms to demonstrate that
they behave as futures. The type $\down^m_m A_m$, much like in \cref{sec:sem-seq}
where we used it to model sequentiality, adds an extra synchronization point.
Here, we shift twice, giving $\down^m_m \down^m_m A_m$, to introduce two
synchronization points. The first is that enforced by our restriction to only
allow sequential cuts in this language (outside of futures), while the second
will become the $\mb{touch}$ construct. We will see both of these when we examine
each process term.

We begin by examining the constructor for futures. Intuitively, when creating a future,
we would like to spawn a new thread
to evaluate $P_\star$ with new destination $z_m$, and immediately write the promise of $z_m$
(represented by a hypothetical new value $\langle z_m \rangle$) into $x_m$, so that
any process waiting on $x_m$ can immediately proceed. The term
\[
x_m.\langle P_\star \rangle
\quad \triangleq \quad 
z_m' \Leftarrow P_\star[z_m'.\mb{shift}(z_m) /\!\!/ \star] \semi x_m.\mb{shift}(z_m')
\]
behaves almost exactly as expected. Rather than spawning $P_\star$ with destination $z_m$,
we spawn $P_\star[z_m'.\mb{shift}(z_m) /\!\!/ \star]$, which will write the result of
$P_\star$ to $z_m$, and a synchronizing shift to $z_m'$. Concurrently, we write the value
$\mb{shift}(z_m')$ to $x_m$, allowing the client of $x_m$ to resume execution, even if
$x_m$ was created by a sequential cut. This value $\mb{shift}(z_m')$ is the first half
of the promise $\langle z_m \rangle$, and the second half, $\mb{shift}(z_m)$, will be
written to $z_m'$ when $P$ finishes executing.

If, while $P$ continues to execute, we touch $x_m$, we would expect to block until
the promise $\langle z_m \rangle$ has been fulfilled by $P$ having written to $z_m$.
Again, we see exactly this behavior from the term
\[
\mb{touch}\, x_m\, (\langle z_m\rangle \Rightarrow Q)
\quad \triangleq \quad
\mb{case}\, x_m\, (\mb{shift}(z_m') \Rightarrow \mb{case}\, z_m' (\mb{shift}(z_m) \Rightarrow Q)).
\]
This process will successfully read $\mb{shift}(z_m')$ from $x_m$, but will block trying to
read from $z_m'$ until $z_m'$ is written to. Since $z_m$ and $z_m'$ are written to at the
same time, we block until $z_m$ is written to, at which point the promise is fulfilled.
Once a result $W$ has been written to $z_m$ and (simultaneously) $\mb{shift}(z_m)$ has
been written to $z_m'$, this process can continue, reading both $z_m'$ and $z_m$, and
continuing as $Q$. Again, this is the behavior we expect a touch construct to have.

This approach does effectively model a form of typed future, which ensures that all
synchronization is explicit, but comes at the cost of overhead from the additional
shifts. Both this and the simpler futures that we describe in \cref{sec:futures}
have their uses, but we believe that the futures native to {\langname} are more
intuitive in general.

\section{Proofs of Type Correctness}
\label{app:trans-type-proofs}

In \cref{sec:functions,sec:fork-join,sec:sill}, we present
type-correctness theorems for our reconstructions of various
concurrency primitives, but omit the details of the proofs.
Here, we present those details.

\paragraph{Functions.}
We derive the typing rules as follows, making use of \cref{lem:seqcut}
to use the admissible $\m{seqcut}$ rule. We omit the conditions on modes
for cut, as we only have one mode:
\[
\infer[\lolli R]
  {\Gamma \seq \mb{case}\, z\, (\langle x, y \rangle \Rightarrow P_y) :: (z : A \lolli B)}
  {\Gamma, x : A \seq P_y :: (y : B)}
\]
\[
\infer-[\m{seqcut}]
  {\Gamma \seq f \Leftarrow P_f ; x \Leftarrow Q_x ; f.\langle x, y \rangle :: (y : B)}
  {\Gamma \seq P_f :: (f : A \lolli B)
  &\infer-[\m{seqcut}]
     {\Gamma, (f : A \lolli B) \seq x \Leftarrow Q_x ; f.\langle x, y \rangle :: (y : B)}
     {\Gamma \seq Q_x :: (x : A)
     &\infer[\lolli L^0]
        {\Gamma, (f : A \lolli B), (x : A) \seq f.\langle x, y \rangle :: (y : B)}
        {\mathstrut}
     }
  }
\]

\paragraph{Fork/Join.}
Due to the length of the process term that defines
$z.\langle P_\star \mid Q_\star \rangle$, we elide portions of it throughout the derivation
below, and we will write $P'$ for $P_\star[x'.\mb{shift}(x) /\!\!/ \star]$, and similarly
$Q'$ for $Q_\star[y'.\mb{shift}(y) /\!\!/ \star]$. With these abbreviations, we have
the following derivation for the $\Vert R$ rule, where the dashed inferences
are made via \cref{lem:seq-shift-typing}.
\begin{srules}
\infer[\cut]
  {\Gamma_C, \Gamma, \Delta \vdash x' \leftarrow P' \semi \ldots :: (z : A \otimes B)}
  {\infer-
     {\Gamma_C, \Gamma \vdash P' :: (x' : \down^m_m A)}
     {\Gamma_C, \Gamma \vdash P_x :: (x : A)}
  &\infer[\cut]
     {\Gamma_C, \Delta, x' : \down^m_m A \vdash y' \leftarrow Q' \semi \ldots :: (z : A \otimes B)}
     {\infer-
       {\Gamma_C, \Delta \vdash Q' :: (y' : \down^m_m B)}
       {\Gamma_C, \Delta \vdash Q_y :: (y : B)}
     &\infer[\down L_0]
        {x' : \down^m_m A, y' : \down^m_m B \vdash
         \mb{case}\; x'\; (\ldots) :: (z : A \otimes B)}
        {\infer[\down L_0]
           {x : A, y' : \down^m_m B \vdash
            \mb{case}\; y'\; (\ldots) :: (z : A \otimes B)}
           {\infer[\otimes R^0]
              {x : A, y : B \vdash z.\langle x, y \rangle :: (z : A \otimes B)}
              {\mathstrut}
           }
        }
     }
  }
\end{srules}
The left rule is much more straightforward, since this encoding
makes the writer of the pair rather than the reader responsible for synchronization.
\begin{srules}
\infer[\otimes L_0]
  {\Gamma, x : A \otimes B \vdash \mb{case}\; x\; (\langle z, w \rangle \Rightarrow R) :: (c : C)}
  {\Gamma, z: A, w : B \vdash R :: (c : C)
  }
\end{srules}

\paragraph{Monadic Concurrency.}
We first construct the typing rules for $\{\cdot\}$, which are straightforward:
\begin{srules}
\infer[\up R]
  {\Gamma_\mS \seq \mb{case}\; d_\mS\; (\mb{shift}(x_\mN) \Rightarrow P_x) :: (d_\mS : \up_\mN^\mS A_\mN)}
  {\Gamma_\mS \seq P_x :: (x_\mN : A_\mN)}
\end{srules}
\begin{srules}
\infer-[\m{seqcut}]
  {\Gamma_\mS \seq y_\mS \Leftarrow P_y ; y_\mS.\mb{shift}(c_\mN) :: (c_\mN : A_\mN)}
  {\Gamma_\mS \geq \mS \geq \mN
  &\Gamma_\mS \seq P_y :: (y_\mS : \up_\mN^\mS A_\mN)
  &\infer[\up L^0]
     {(y_\mS : \up_\mN^\mS A_\mN) \seq y_\mS.\mb{shift}(c_\mN) :: (c_\mN : A_\mN)}
     {\mathstrut}
  }
\end{srules}
We then construct the typing rules for $\land$:
\begin{srules}
\infer[\cut]
  {\Gamma_W, (v_\mS : A_\mS), (y_\mN : B_\mN) \vdash
   x_\mN \leftarrow x_\mN.\mb{shift}(v_\mS) ; d_\mN.\langle x_\mN, y_\mN \rangle :: (d_\mN :: A_\mS \land B_\mN)
  }
  {
  &\infer[\down R^0]
     {v_\mS : A_\mS \vdash x_\mN.\mb{shift}(v_\mS) :: (x_\mN :: \down^\mS_\mN A_\mS)}
     {\mathstrut}
  &\infer[\otimes R^0]
     {\Gamma_W, (x_\mN :: \down^\mS_\mN A_\mS) \vdash d_\mN.\langle x_\mN, y_\mN \rangle :: (d_\mN :: (\down^\mS_\mN A_\mS) \otimes B_\mN)}
     {\mathstrut}
  }
\end{srules}
\begin{srules}
\infer[\otimes L_0]
  {\Gamma, (d_\mN : (\down^\mS_\mN A_\mS) \otimes B_\mN) \vdash
   \mb{case}\; d_\mN\; (\langle x_\mN, w_\mN \rangle \Rightarrow \mb{case}\; x_\mN (\mb{shift}(v_\mS) \Rightarrow P_z))
   :: (z_\mN : C_\mN)
  }
  {\infer[\down L_0]
     {\Gamma, (x_\mN : \down^\mS_\mN A_\mS), (w_\mN : B_\mN) \vdash
      \mb{case}\; x_\mN (\mb{shift}(v_\mS) \Rightarrow P_z) :: (z_\mN : C_\mN)
     }
     {\Gamma, (v_\mS : A_\mS), (w_\mN : B_\mN) \vdash P_z :: (z_\mN : C_\mN)}
  }
\end{srules}
Note that unlike the rules for $\{\cdot\}$ or for many of the constructs
in previous sections, those for $\land$ are not only admissible --- they
are \emph{derivable}.

\end{document}